\newtheorem{theorem}{Theorem}
\newtheorem{corollary}[theorem]{Corollary}
\newtheorem{proposition}[theorem]{Proposition}
\newtheorem{lemma}[theorem]{Lemma}
\newtheorem{example}[theorem]{Example}
\newtheorem{definition}[theorem]{Definition}
\newcommand{\comp}{\ensuremath{\textrm{cr}}\xspace}
\begin{document}

\author[1]{Spyros Angelopoulos}

\author[2]{Thomas Lidbetter}

\author[3]{Konstantinos Panagiotou}

\date{}

\affil[1]{CNRS and LIP6-Sorbonne University}

\affil[2]{Rutgers Business School}

\affil[3]{University of Munich}

\title{Search Games with Predictions}

\maketitle

\begin{abstract}
We introduce the study of {\em search games} between a mobile Searcher and an immobile Hider in a new setting in which the Searcher has some potentially erroneous information, i.e., a {\em prediction} on the Hider's position.  The objective is to establish tight tradeoffs between the {\em consistency} of a search strategy (i.e., its worst case expected payoff assuming the prediction is correct) and its {\em robustness} (i.e., the worst case expected payoff with no assumptions on the quality of the prediction). Our study is the first to address the full power of mixed (randomized) strategies; previous work focused only on deterministic strategies, or relied on stochastic assumptions that do not guarantee worst-case robustness in adversarial situations. We give Pareto-optimal strategies for three fundamental problems, namely searching in discrete locations, searching with stochastic overlook, and searching in the infinite line. As part of our contribution, we provide a novel framework for proving optimal tradeoffs in search games which is applicable, more broadly, to any two-person zero-sum games in learning-augmented settings.

\bigskip
\noindent
{\bf Keywords:} Search games, learning-augmented algorithms, randomized algorithms, consistency, robustness.

\end{abstract}

\noindent

\section{Introduction}
\label{sec:intro}

Searching for a hidden target is a common task in everyday life, and an important computational setting with numerous real-world applications. Problems related to search arise in such diverse areas as drilling for oil in multiple sites, the deployment of search-and-rescue operations, or robot navigation in unknown terrains. We are interested, specifically, in the formulation in which a mobile {\em Searcher} must locate an immobile {\em Hider} that lies in some unknown position within the {\em search space}, i.e., the environment in which the search takes place. There is, also, some underlying concept of the quality of search, which captures, informally, how quickly the Searcher can locate the Hider, or equivalently, for how long the Hider can evade being found. 

Within the field of Operations Research, search problems have been studied under the mathematical formulation of a zero-sum two person game between the Searcher and the Hider. Here, the two players define first their respective set of {\em pure strategies}, then follow a probabilistic choice among these pure strategies, which gives rise to a {\em mixed strategy}. The quality of the (pure or mixed) strategy is reflected in an appropriately defined {\em payoff}, which the Searcher seeks to minimize, whereas the Hider seeks to maximize.  The objective, in this formulation, is to identify the {\em value} of the search game, namely the expected payoff of Searcher and Hider strategies that are at equilibrium or, equivalently, the expected payoff of the corresponding best-response strategies. This game-theoretic framework has been applied successfully to a multitude of search problems, and {\em search games} has developed into a prolific field; we refer the reader to books such as~\cite{isaacs65},~\cite{searchgames},~\cite{alpern2013search},~\cite{Gal80}.

Search games have also been studied from a computational standpoint, often with an explicit distinction between pure (i.e., deterministic) and mixed (i.e., randomized) strategies. Here, the emphasis is both on the efficiency and the computability of the strategies, in the sense that approximate strategies are sought if the underlying problem is computationally hard. Many search problems have been studied from this TCS perspective, see e.g.,~\cite{koutsoupias:fixed},~\cite{yates:plane},~\cite{alex:robots},\cite{jaillet:online}~\cite{ray:2randomized},~\cite{hybrid},~\cite{ultimate},~\cite{kupavskii2018lower},~\cite{DBLP:journals/algorithmica/BonatoGMP23},~\cite{kalyanasundaram1993competitive},~\cite{demaine:turn},~\cite{spiral} for some representative examples of search-related algorithms.

\subsection{Search games with predictions}

Previous work on search games has focused predominantly on games in which no assumptions are made by one player about the strategy the other player will choose. In practice, however, such information is often both available and useful, notwithstanding its inherent inaccuracy. For example, in a search-and-rescue operation, the rescue team may use witness information about the whereabouts of the missing person, which often may  lead to a much faster tracking of the person. To formulate such settings, let $H$ denote the {\em prediction} available to the Searcher, i.e., a subset of the Hider's strategies, which will be defined according to the search game at hand. For games over {\em pure} strategies,~\cite{DBLP:conf/innovations/000121} studied a model in which the quality of a search strategy $S$ is determined by two values $(C(S),R(S))$.  Here, $C(S)$ is the {\em consistency} of $S$, namely the worst-case payoff of $S$ assuming that the prediction $H$ is {\em accurate}, in that it represents faithfully and error-freely the corresponding information on the position of the Hider. On the other hand, $R(S)$ is the {\em robustness} of $S$, namely the worst-case payoff of $S$ with no assumptions on the accuracy of the prediction. These two objectives are, clearly, in a tradeoff relation, and the goal  is to identify the best-possible such tradeoff. Using terminology from multi-objective optimization, we seek the {\em Pareto frontier}~\cite{boyd2004convex} of the game, which describes the best-possible consistency, under the condition that the robustness does not exceed some given value $r$. An alternative definition seeks the best robustness that can be achieved, under the condition that the consistency does not exceed a given value $c$. The above definitions are borrowed from {\em learning-augmented} online computation, a field that has witnessed remarkable growth recently, and in which the online algorithm is enhanced with some machine-learned prediction; see, e.g.,~\cite{DBLP:conf/icml/LykourisV18},~\cite{NIPS2018_8174}, the survey~\cite{mitzenmacher_vassilvitskii_2021} and the online collection~\cite{predictionslist} of several works in the past few years. 

This framework carries over to the game-theoretical formulation, i.e., when the players have mixed (randomized) strategies. However, as we will discuss, randomization poses new challenges, which are the main focus of this work.

\subsection{Contribution}
\label{subsec:contribution}
We study three important games from search theory in learning-augmented settings with fully randomized player strategies. The first problem (Section~\ref{sec:box-finite}) is that of a Searcher who wishes to find a Hider among a set of discrete locations (e.g., boxes), each of which has a designated search cost (e.g., the cost for opening each box), which we interpret as the time taken to search that location (or {\em search time}). The Searcher opens boxes one by one, until she finds the Hider, and the objective is to minimize the total time spent searching. The aim is to find min-max mixed (randomized) strategies for the Searcher (and max-min strategies for the Hider.) For the standard, no predictions setting, a solution of the special case with unit costs can be found in~\cite{ruckle1991discrete}, while for general costs, solutions can be found in~\cite{Condon:2009:ADA:1497290.1497300}. The work~\cite{Condon:2009:ADA:1497290.1497300} also highlights the relation of the game to database query optimization, in which queries are evaluated via ``pipelined'' filtering, a problem with many applications from quality control to machine learning.
Other solutions can also be found in~\cite{Lidbetter:multiple} and \cite{AL:expanding}. In our setting, the (natural) prediction is a subset of the boxes in which the Hider is predicted to hide\footnote{Note that our model allows for $H$ to define a {\em set} of potential predictions instead of a single prediction. We refer to Section~\ref{sec:preliminaries} for the formal definition of consistency in this context, which is in line with the multiple-predictions framework of~\cite{anand2022online}.}.

Our second problem (Section~\ref{sec:box-infinite}) extends the box search game to a setting of {\em imperfect} detection. Here, there is a known detection probability $q$, in the sense that each time a box containing the Hider is searched, the Hider is found independently with probability $q$.
As for the standard box search game with no overlook probabilities, the aim is to find min-max and max-min strategies for the Searcher and Hider, respectively. Variants of this game (without predictions) have been studied in recent works~\cite{clarkson2023classical},~\cite{clarkson2024computing} and a closed-form solution was given in~\cite{bui2024optimal}. The problem can help formulate applications related to stochastic product testing and query evaluation, when the testing oracle is only partially reliable~\cite{clarkson2023classical}. As in the standard box search, in our setting the prediction is expressed as a subset of boxes in which the Hider is expected to be.



Our last problem is perhaps the most fundamental search game in an unbounded environment. Namely, in Section~\ref{sec:linear.search} we consider the problem of {\em linear search} (informally known in TCS as the {\em cow-path} problem), in which the Hider hides at some point in an infinite line, while the Searcher is initially located to some point designated as the {\em origin} and must locate the Hider by alternatively exploring the left and the right halflines (relative to the origin). In its standard version, this problem was first solved by Gal~\cite{Gal80}, and independently in~\cite{ray:2randomized}. Linear search is one of the classic problems in competitive analysis, and has close connections to other resource allocation problems under uncertainty, such as the design of interruptible algorithms~\cite{steins, DBLP:journals/scheduling/Angelopoulos22}. In our work, we describe the Pareto frontier in the setting in which there is some natural directional prediction about the position of the Hider, i.e., whether it hides to the left or to the right of the origin. This problem was previously studied in the more restricted setting of pure strategies in~\cite{DBLP:conf/innovations/000121}.

For all the above problems, we provide Pareto-optimal strategies that achieve optimal consistency-robustness tradeoffs. To our knowledge, this is the first work to analyze such tradeoffs for randomized algorithms (mixed strategies) not only within the field of search games, but more generally for games with imperfect predictions. We emphasize that randomization poses new challenges in the analysis of the underlying games, notably in regards to the application of the {\em minimax} theorem for obtaining lower bounds on the performance of the search strategies. For problems such as box search, we achieve this by showing tight lower bounds on the robustness of any search strategy with a given consistency $c$. However, this approach is not immediately applicable to other problems such as searching on the line, since it is not clear how to characterize the class of mixed search strategies of bounded consistency (or robustness). To bypass this complication, in Section~\ref{sec:general} we use ideas from multi-objective optimization and show that it suffices to find the value of a new game, in which the Searcher seeks to minimize a linear combination of consistency and robustness. This provides a new methodology that is applicable not only to search games, but to any two-person zero-sum game with predictions, and does not rely on any explicit characterization of best-response strategies. 

One might expect that for box search games, strategies that are Pareto optimal can be obtained by randomizing between the optimal strategy if the prediction is assumed to be true, and the optimal strategy if the prediction is ignored (an approach often used in learning-augmented online algorithms). Indeed, for box search with perfect detection, this approach is fruitful, but it cannot work for the imperfect detection variant, in which every strategy of finite robustness must be infinite. The reason is that if the prediction is assumed to be correct, then the optimal strategy would be to repeatedly search the boxes indicated by the prediction, and never search any other boxes. But if this strategy is played with any positive probability, then with this probability the other boxes will never be searched, leading to infinite robustness. Therefore, it is necessary to construct strategies that delicately balance the need to give preference to boxes suggested by the prediction and the need to ensure that the other boxes are not neglected. A similar situation arises in the linear search game, where the optimal strategy exhibits a carefully chosen ``bias'' towards the predicted half-line.


\subsection{Other related work}
\label{subsec:related}

The work~\cite{DBLP:conf/innovations/000121} introduced the setting of Pareto-based analysis of deterministic search algorithms, and showed upper and lower bounds on the consistency/robustness tradeoffs for pure strategies, in the context of the linear search problem. Subsequent work~\cite{DBLP:conf/mfcs/000123} improved some of these bounds, and extended others to the $m$-ray search problems, assuming again pure strategies. \cite{eberle2022robustification} showed how to robustify graph exploration algorithms that leverage a prediction on the minimum spanning tree of the explored graph. \cite{DBLP:conf/innovations/BanerjeeC0L23} and~\cite{depavia2024learning} studied graph search algorithms in a setting in which vertex of the graph can provide an imperfect estimation of the distance of the said vertex to the hider, whereas~\cite{cabello2024searching} studied learning-augmented searching in Euclidean Spaces. 

There are several works related to exploring a known or unknown search space in the {\em advice complexity} model. Here, the explorer is enhanced with an additional, and error-free information called {\em advice}, which may originate from a very powerful oracle. The objective is to quantify the number of advice bits required to achieve a desired performance, typically measured by the {\em competitive ratio}, see., e.g.,~\cite{dobrev2012online},~\cite{dobrev2012online},~\cite{gorain2018deterministic},~\cite{komm2015treasure},~\cite{pelc2021advice}. These works focus on the performance of deterministic strategies, and they assume that the prediction is error-free.  

Some previous work has studied settings in which the Searcher relies on advice in the form of noisy queries~\cite{boczkowski2018searching},~\cite{DBLP:journals/corr/abs-2303-13378},~\cite{DBLP:journals/corr/abs-2111-09093}. For instance, when searching a binary tree, an oracle may advice the Searcher to proceed to the left or to the right subtree of the currently visited node. In this setting, the response to each query is correct with some fixed probability that is known to the Searcher. In our setting, in contrast, we do not rely on any probabilistic assumptions with regards to the quality of the advice. Last, we note that finding Pareto-optimal algorithms is the main focus of several works in learning-augmented online computation, 
e.g.,~\cite{sun2021pareto,wei2020optimal,li2021robustness,lee2021online,lee2022pareto,christianson2023optimal,DBLP:conf/innovations/0001DJKR20,DBLP:conf/eenergy/Lee0HL24, bamas2020primal}. 

\section{Preliminaries}
\label{sec:preliminaries}

We consider the abstraction of a zero-sum search game $G$ between Player~1 (the minimizer Searcher) and Player~2 (the maximizer Hider). In the standard, no-predictions setting, we will refer to $G$ as the {\em standard} game. Let $X$ be the set of Player~1's pure strategies and let $Y$ be the set of Player~2's pure strategies.  We generally assume that $X$ and $Y$ are both finite, except for the linear search problem in Section~\ref{sec:linear.search}, where they are both infinite. Let $\Delta(X)$ and $\Delta(Y)$ be the players'  mixed strategy sets (i.e., the set of probability distributions over $X$ and $Y$, respectively). Let $u(x,y)$ denote the expected {\em payoff} when Player~1 plays $x\in \Delta(X)$ and Player~2 plays ${y \in \Delta(Y)}$. We also extend the notation to allow the argument of the payoff $u$ to take values in $X$ and $Y$, i.e., in pure strategies.  Note that the definition of the payoff is particular to the game in question.
Using a convention from the theory of search games, we will use small letters to refer to mixed strategies (e.g., $s$) and capital letters to refer to pure strategies (e.g., $S$).

A {\em prediction} is given by a proper subset $H \subset Y$ (so that $H \neq \emptyset, Y$). For a given $x \in \Delta(X)$ and a given prediction $H$, we define the {\em consistency} $C(x)$ and the {\em robustness} $R(x)$ by
\[
C(x)  = \sup_{y \in H} u(x,y) \ \textrm{ and }
R(x) = \sup_{y \in Y} u(x,y).
\]
In other words, the consistency is the worst case expected payoff of $x$, given that the prediction is correct (that is, Player~2 chooses a strategy in $H$), whereas the robustness is the worst-case expected payoff regardless of whether or not the prediction is correct. Note that the consistency has a dependency on $H$, which we suppress in the notation, since $H$ is considered to be fixed. 

For a given prediction $H$, we would like to minimize both the consistency and robustness, and so we seek to understand the set $S$ of pairs $(C(x),R(x))$, and in particular to characterize the set $P$ of Player~1 strategies that are {\em Pareto optimal}. More precisely, a strategy $x \in \Delta(X)$ is Pareto optimal if there does not exist any $x'$ such that $C(x') \le C(x)$ and $R(x') \le R(x)$, with at least one of these inequalities being strict.

Observe that if the game $G$ has a value $V = \inf_{x \in \Delta(X)} \sup_{y \in Y} u(x,y) = \sup_{y \in \Delta(Y)} \inf_{x \in X} u(x,y)$, then no strategy $x\in \Delta(X)$ can have robustness less than $V$. If the game obtained by restricting Player 2's strategy set to $H$ has a value $V'$, then no strategy $x \in \Delta(X)$ can have consistency less than $V'$. Furthermore, no strategy $x \in \Delta(X)$ on the Pareto frontier can have consistency greater than $V$, otherwise its robustness would also be greater than $V$, so it would be Pareto dominated by any optimal Player 1 strategy in $G$ (or some $\varepsilon$-optimal strategy, for small enough $\varepsilon$).

\section{Box Search}
\label{sec:box}

In this section, we study a fundamental search game in which a Hider (Player 2) hides in one of a set of boxes and a Searcher (Player 1) looks in the boxes one by one until she finds the Hider. We consider two variations of the game. In the first variation, which we use as a warm-up for the more complex setting, the Searcher is guaranteed to find the Hider when opening the box where it is located, whereas in the second variation, there is a given probability with which the Hider is detected if the Searcher opens the box in which it is located.

\subsection{Box search with perfect detection} \label{sec:box-finite}

In this game, the Hider's strategy set $Y$ is the set $\{1,2,\ldots,n\}$ of boxes and the Searcher's strategy set $X$ is the set of permutations of $Y$.  
We denote by $t_j$ the search {\em time} required to search box $j \in Y$. For a given permutation $\sigma \in X$ and a given box $j \in Y$, the payoff $u(\sigma,j)$ (also called the {\em search time} of $\sigma$ against $j$) is defined as the sum of the search times $t_i$ of the boxes $i$ opened up to and including box $j$. 

We will first state the known solution to the standard game, giving an optimal searcher strategy, which we denote by $s_Y$. Here, for any $Z\subseteq Y$  we denote by $s_Z$ the strategy that chooses the first box in $Z$ with probability proportional to its search time, then opens the remaining boxes in $Z$ in a uniformly random order.

\begin{theorem}[\cite{Lidbetter:multiple}]\label{thm:box}
An optimal (min-max) strategy for the Searcher in box search with perfect detection is given by  the search  $s=s_Y \in \Delta(X)$.  This results in an expected search time (equal to the value $V(Y)$ of the game) of
\begin{align}
 V(Y)=\frac{t^2(Y) + t(Y)^2}{2t(Y)}, \label{eq:box-value}
\end{align}
where $t^2(Y) = \sum_{j \in Y} t_j^2$ and $t(Y) = \sum_{j \in Y} t_j$.
\end{theorem}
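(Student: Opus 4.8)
The plan is to pin down the value $V(Y)$ by sandwiching it between a matching upper and lower bound, exploiting the fact that the game is finite and therefore has a value by the minimax theorem. For the upper bound I would show that $s_Y$ is an \emph{equalizing} strategy for the Searcher, guaranteeing expected payoff exactly $V(Y)$ against every pure Hider location $j$; for the lower bound I would exhibit a correspondingly simple Hider strategy that forces expected payoff at least $V(Y)$ against every pure Searcher permutation. Since both candidate strategies turn out to equalize the payoff across all opposing pure strategies, each bound extends immediately from pure to mixed opposing strategies by linearity of expectation, and the two bounds then coincide.

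For the upper bound, fix a target box $j$ and write the search time as $u(\sigma,j) = t_j + \sum_{i \neq j} t_i \cdot \mathbf{1}[i \text{ opened before } j]$, so that
\[
\mathbb{E}[u(s_Y,j)] = t_j + \sum_{i \neq j} t_i \cdot \Pr[i \prec j],
\]
where $\prec$ denotes ``opened before'' under $s_Y$ and the expectation is over the Searcher's randomization. The crux is the crossing probability $\Pr[i \prec j]$, which I would compute by conditioning on the size-biased first box: if the first box is $i$ (probability $t_i/t(Y)$) then $i \prec j$; if it is $j$ (probability $t_j/t(Y)$) then $j \prec i$; and otherwise (probability $(t(Y)-t_i-t_j)/t(Y)$) the remaining boxes, including $i$ and $j$, are in uniformly random order, so $i \prec j$ holds with probability $1/2$. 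This yields
\[
\Pr[i \prec j] = \frac{1}{2} + \frac{t_i - t_j}{2 t(Y)}.
\]
Substituting and simplifying, the terms involving $t_j$ cancel and one is left with $\mathbb{E}[u(s_Y,j)] = \tfrac{1}{2} t(Y) + \tfrac{1}{2} t^2(Y)/t(Y) = V(Y)$, independent of $j$. Hence $s_Y$ guarantees payoff exactly $V(Y)$, so the value is at most $V(Y)$.

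For the lower bound, I would let the Hider choose box $j$ with probability $t_j/t(Y)$ (the same size-biased distribution). Against a fixed permutation opening boxes with search times $a_1,\dots,a_n$ in that order, writing $S_k = \sum_{l \le k} a_l$, the expected payoff is $\tfrac{1}{t(Y)} \sum_{k} a_k S_k$. The key algebraic identity is $\sum_k a_k S_k = \tfrac{1}{2}\bigl( (\sum_k a_k)^2 + \sum_k a_k^2 \bigr)$, whose right-hand side is symmetric in the $a_k$ and hence independent of the permutation; since $\sum_k a_k = t(Y)$ and $\sum_k a_k^2 = t^2(Y)$, the expected payoff equals $V(Y)$ for every permutation, and therefore against every mixed Searcher strategy. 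This gives value at least $V(Y)$, completing the argument.

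I expect the main obstacle to be the crossing-probability computation $\Pr[i \prec j]$, since it is the only place where the precise definition of $s_Y$ (size-biased first box, then uniform) enters in a nontrivial way; once that formula is in hand, both the upper and the lower bound reduce to routine algebra, and the pleasant feature is that both of the natural size-biased strategies equalize the payoff exactly, which is precisely what makes the two bounds meet.
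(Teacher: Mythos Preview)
Your proposal is correct and matches the approach the paper indicates: the paper does not prove Theorem~\ref{thm:box} in full (it is cited from~\cite{Lidbetter:multiple}) but states that the lower bound comes from Lemma~\ref{lem:hiding}, which is exactly your size-biased Hider strategy together with the identity $\sum_k a_k S_k = \tfrac{1}{2}\bigl((\sum_k a_k)^2 + \sum_k a_k^2\bigr)$. Your upper bound via the crossing-probability computation for $s_Y$ supplies the other half that the paper leaves implicit.
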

\cite{Condon:2009:ADA:1497290.1497300} independently gave an alternative optimal randomized strategy. Theorem~\ref{thm:box} was proved in~\cite{Lidbetter:multiple} by using a lower bound provided by the following useful lemma. 
\begin{lemma} \label{lem:hiding}
Suppose the target is hidden in box $j$ with probability $h^*_Y (j) \equiv t_j/t(Y)$ proportional to the search time of box $j$. Then any search strategy against $h^*_Y$ has expected search time $V(Y)$.
\end{lemma}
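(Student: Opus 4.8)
The plan is to prove the stronger equalising statement: the expected search time against $h^*_Y$ is the same constant for \emph{every} pure searcher strategy, and that constant equals $V(Y)$. Since the payoff is linear in the Searcher's mixed strategy, so that for any $s \in \Delta(X)$ we have $u(s,h^*_Y)=\sum_{\sigma} s(\sigma)\,u(\sigma,h^*_Y)$, the claim for arbitrary mixed strategies will then follow immediately by taking convex combinations. Accordingly, I would fix an arbitrary permutation $\sigma \in X$ and, after relabelling the boxes, assume without loss of generality that $\sigma$ opens them in the order $1,2,\ldots,n$. Then $u(\sigma,j)=\sum_{i\le j} t_i$, and the expected search time against $h^*_Y$ is
\[
\sum_{j \in Y} h^*_Y(j)\, u(\sigma, j) \;=\; \frac{1}{t(Y)} \sum_{j=1}^n t_j \sum_{i=1}^{j} t_i \;=\; \frac{1}{t(Y)} \sum_{1 \le i \le j \le n} t_i t_j.
\]

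The key observation, which is the crux of the argument and the reason $h^*_Y$ is an equalising hiding strategy, is that the double sum on the right is independent of the ordering induced by $\sigma$. Indeed, for each unordered pair of distinct boxes exactly one ordered pair $(i,j)$ with $i$ opened no later than $j$ contributes the product $t_i t_j$, and each box contributes its diagonal term $t_j^2$; hence
\[
\sum_{1 \le i \le j \le n} t_i t_j \;=\; \sum_{i < j} t_i t_j + \sum_{j} t_j^2,
\]
a quantity that depends only on the multiset $\{t_j : j \in Y\}$ and not at all on the permutation $\sigma$. Once this permutation-invariance is isolated, the remainder is elementary.

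To conclude, I would invoke the identity $t(Y)^2 = \bigl(\sum_j t_j\bigr)^2 = \sum_j t_j^2 + 2\sum_{i<j} t_i t_j = t^2(Y) + 2\sum_{i<j} t_i t_j$ to substitute $\sum_{i<j} t_i t_j = \tfrac{1}{2}\bigl(t(Y)^2 - t^2(Y)\bigr)$, whence $\sum_{i \le j} t_i t_j = \tfrac{1}{2}\bigl(t(Y)^2 + t^2(Y)\bigr)$. Dividing by $t(Y)$ gives an expected search time of $\bigl(t^2(Y) + t(Y)^2\bigr)/\bigl(2\,t(Y)\bigr) = V(Y)$, matching \eqref{eq:box-value}. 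I do not anticipate any genuine obstacle beyond recognising the symmetry of the pairwise sum; the reduction to pure strategies and the final algebraic simplification are both routine.
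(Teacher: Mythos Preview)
Your proof is correct. The paper does not actually provide its own proof of this lemma; it is stated as a known result drawn from~\cite{Lidbetter:multiple}, so there is nothing to compare against directly. That said, your argument is exactly the natural one: reduce to pure strategies by linearity, observe that the expected cost against $h^*_Y$ is $\frac{1}{t(Y)}\sum_{i\le j} t_i t_j$ for any ordering, and recognise this sum as the permutation-invariant quantity $\tfrac{1}{2}\bigl(t(Y)^2+t^2(Y)\bigr)$. No gaps.
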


We now move to our learning-augmented setting, in which there is a given  prediction $H \subset [n]$, and let us denote by $H^c$ the complement of $H$. We will show that, in order to obtain a Pareto-optimal strategy, the Searcher should mix between two strategies: one that first performs an optimal search $s_H$ of $H$ followed by an optimal search $s_{H^c}$ of $H^c$, and one in which the order of optimal searches in $H$ and $H^c$ is reversed (c.f. Definition~\ref{def:s_box} and Lemma~\ref{lem:P-opt}). In this strategy, the higher the probability of searching $H$ first, the higher the consistency and the lower the robustness. We will prove that the consistency and robustness of the family of strategies we will introduce are related by a linear equation. We will also obtain a  tight lower bound by proving that any mixed Searcher strategy satisfies a corresponding linear inequality (Lemma~\ref{lem:box-lb}); this is accomplished by exploiting knowledge of optimal Hider strategies in the game.

We will use the following elementary lemma.
\begin{lemma}\label{lem:V-identity}
For any $H \subseteq Y$,
\begin{align} 
t(Y) V(Y) &= t(H) V(H) + t(H^c) V(H^c) + t(H)t(H^c), \label{eq:V}
\end{align}
where $H^c$ denotes the complement of $H$.
\end{lemma}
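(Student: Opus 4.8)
The plan is to treat this as a direct algebraic identity, since once we substitute the closed form for the value from Theorem~\ref{thm:box}, the only structural fact needed is that both set functions $t(\cdot)$ and $t^2(\cdot)$ are additive over the partition $Y = H \cup H^c$. Concretely, from \eqref{eq:box-value} we have for any nonempty $Z \subseteq Y$ that $V(Z) = (t^2(Z) + t(Z)^2)/(2t(Z))$, and hence the convenient ``de-fractioned'' form
\[
t(Z)\,V(Z) = \tfrac{1}{2}\bigl(t^2(Z) + t(Z)^2\bigr).
\]
I would first rewrite all three $t(\cdot)V(\cdot)$ products appearing in \eqref{eq:V} using this identity, which clears every denominator and reduces the claim to a polynomial identity in the four quantities $t(H)$, $t(H^c)$, $t^2(H)$, and $t^2(H^c)$.

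Next I would invoke additivity over the disjoint union $Y = H \cup H^c$. Since each box lies in exactly one of $H$ or $H^c$, summing search times gives $t(Y) = t(H) + t(H^c)$, and summing squared search times gives $t^2(Y) = t^2(H) + t^2(H^c)$. Substituting the first of these into $t(Y)^2$ and expanding yields
\[
t(Y)^2 = t(H)^2 + 2\,t(H)\,t(H^c) + t(H^c)^2.
\]
The left-hand side of \eqref{eq:V} then becomes $\tfrac{1}{2}\bigl(t^2(H) + t^2(H^c) + t(H)^2 + 2\,t(H)t(H^c) + t(H^c)^2\bigr)$, while the first two terms on the right-hand side combine to $\tfrac{1}{2}\bigl(t^2(H) + t(H)^2 + t^2(H^c) + t(H^c)^2\bigr)$.

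Finally I would subtract: all the $t^2$ and squared-sum terms cancel exactly, leaving the residual $\tfrac{1}{2}\cdot 2\,t(H)t(H^c) = t(H)t(H^c)$, which matches the claimed cross term in \eqref{eq:V} and completes the proof. There is essentially no hard step here; the computation is routine. The only point worth flagging is that the whole identity hinges on the additivity of the \emph{sum-of-squares} functional $t^2(\cdot)$, which is valid precisely because $H$ and $H^c$ are disjoint, and that the extra term $t(H)t(H^c)$ is nothing more than the cross term produced when the square of the total $t(Y) = t(H)+t(H^c)$ is expanded.
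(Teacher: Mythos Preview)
Your proposal is correct and essentially identical to the paper's own argument: both substitute the closed form $t(Z)V(Z)=\tfrac{1}{2}(t^2(Z)+t(Z)^2)$ from Theorem~\ref{thm:box} and then use the additivity of $t(\cdot)$ and $t^2(\cdot)$ over the partition $Y=H\cup H^c$, with the cross term $t(H)t(H^c)$ arising from expanding $t(Y)^2$. The only cosmetic difference is that the paper starts from the right-hand side of~\eqref{eq:V} and simplifies to the left, whereas you expand both sides and compare.
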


\begin{proof}
The right-hand side of~(\ref{eq:V}) is, by definition
\begin{align*}
t(H) V(H) + t(H^c) V(H^c) + t(H)t(H^c) &=  \frac{t^2(H) + t(H)^2}{2} + \frac{t^2(H^c) + t(H^c)^2}{2}+ t(H)t(H^c) \\
&= \frac{t^2(Y) +t(Y)^2}{2} \\
&= t(Y) V(Y).
\end{align*}
\end{proof}




We now introduce a family of search strategies, indexed by the parameter $\alpha$, where $0 \le \alpha \le 1$.

\begin{definition} 
Let $s^*=s^*(\alpha)$ be the search that with probability~$\alpha$ follows $s_H$ then follows $s_{H^c}$, and with probability $1-\alpha$ follows $s_{H^c}$ then follows $s_H$. 
\label{def:s_box}
\end{definition}

The next lemma shows that the consistency and robustness of strategies of the form $s^*(\alpha)$ satisfy a linear relation when $\alpha$ is at least some constant $\alpha^*$. This is the last, and most crucial part of the lemma. The first part gives expressions for the expected search time of $s^*(\alpha)$ for boxes in both $H$ and $H^c$. The second part of the lemma shows that for $\alpha=\alpha^*$, the consistency and robustness are both equal to the value $V(Y)$ of the game. It is clear that for any smaller $\alpha$, the strategy $s^*(\alpha)$ cannot be Pareto optimal, since both the consistency and the robustness will be equal to the expected search time of boxes in $H$, and this will be greater than $V(Y)$.

\begin{lemma} \label{lem:P-opt}
Let 
$\alpha^*= 1-\frac{V(Y)-V(H)}{t(H^c)}$. 
Then 
\begin{enumerate}
\item[(i)] For $i \in H, j \in H^c$ and any $0 \le \alpha \le 1$,
\begin{align*}
u(s^*(\alpha),i)&=V(H) + (1-\alpha)t(H^c) \text{ and } \\
u(s^*(\alpha),j) &=  V(H^c) + \alpha t(H);
\end{align*}
\item[(ii)] 
$C(s^*(\alpha^*))= R(s^*(\alpha^*)) = V(Y)$;
\item[(iii)] For any $\alpha^* \le \alpha \le 1$, it holds that 
$
t(H) C(s^*(\alpha)) + t(H^c) R(s^*(\alpha)) = t(Y) V(Y).
$
\end{enumerate}
\end{lemma}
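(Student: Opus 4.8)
The plan is to prove the three parts in order, with part~(i) carrying the computational content and parts~(ii)--(iii) following by substitution together with the identity of Lemma~\ref{lem:V-identity}. The one fact I need beyond the definitions is that the optimal search $s_Z$ is \emph{equalizing} over $Z$, i.e. $u(s_Z,i)=V(Z)$ for every box $i\in Z$. I would establish this from Lemma~\ref{lem:hiding} and the optimality of $s_Z$ (Theorem~\ref{thm:box} applied to the subgame on $Z$): optimality gives $u(s_Z,i)\le V(Z)$ for all $i\in Z$, while Lemma~\ref{lem:hiding} gives $\sum_{i\in Z}(t_i/t(Z))\,u(s_Z,i)=u(s_Z,h^*_Z)=V(Z)$; since the weights are positive (each $t_i>0$) and sum to one, a weighted average of quantities all at most $V(Z)$ equals $V(Z)$ only if each equals $V(Z)$. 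With this in hand I condition on the two orderings of Definition~\ref{def:s_box}. For $i\in H$: with probability $\alpha$ the searcher runs $s_H$ first and finds the Hider at expected time $u(s_H,i)=V(H)$, while with probability $1-\alpha$ she first exhausts $H^c$ (cost $t(H^c)$) and then runs $s_H$, giving $t(H^c)+V(H)$; averaging yields $u(s^*(\alpha),i)=V(H)+(1-\alpha)t(H^c)$. The computation for $j\in H^c$ is symmetric and gives $u(s^*(\alpha),j)=V(H^c)+\alpha t(H)$.

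For part~(ii), since every box in $H$ yields the same payoff, part~(i) gives $C(s^*(\alpha))=V(H)+(1-\alpha)t(H^c)$; substituting $\alpha=\alpha^*$ and using $(1-\alpha^*)t(H^c)=V(Y)-V(H)$ collapses this to $V(Y)$. For the robustness I must also show the $H^c$-payoff equals $V(Y)$ at $\alpha^*$. Writing Lemma~\ref{lem:V-identity} via $t(Y)=t(H)+t(H^c)$ in the form $t(H)(V(Y)-V(H))+t(H^c)(V(Y)-V(H^c))=t(H)t(H^c)$, then inserting $V(Y)-V(H)=(1-\alpha^*)t(H^c)$ and dividing by $t(H^c)$, I obtain $V(Y)-V(H^c)=\alpha^*t(H)$, i.e. $V(H^c)+\alpha^*t(H)=V(Y)$. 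Hence both the $H$- and $H^c$-payoffs equal $V(Y)$ at $\alpha^*$, so the robustness $R(s^*(\alpha^*))=\max\{V(Y),V(Y)\}=V(Y)$ as claimed.

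For part~(iii), I first pin down which term governs the robustness on the range $\alpha^*\le\alpha\le1$. By part~(i), $C(s^*(\alpha))=V(H)+(1-\alpha)t(H^c)$ is decreasing in $\alpha$ while the $H^c$-payoff $V(H^c)+\alpha t(H)$ is increasing, and by part~(ii) the two coincide at $\alpha^*$; hence for $\alpha\ge\alpha^*$ the $H^c$-payoff dominates, so $R(s^*(\alpha))=V(H^c)+\alpha t(H)$ while $C(s^*(\alpha))=V(H)+(1-\alpha)t(H^c)$. Forming $t(H)C(s^*(\alpha))+t(H^c)R(s^*(\alpha))$, the two cross terms combine into $t(H)t(H^c)\big[(1-\alpha)+\alpha\big]=t(H)t(H^c)$, leaving $t(H)V(H)+t(H^c)V(H^c)+t(H)t(H^c)$, which is exactly $t(Y)V(Y)$ by Lemma~\ref{lem:V-identity}.

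I expect the main obstacle to be part~(i), specifically the clean justification of the equalizing property; once that is in place the conditioning computation is routine, and the rest is algebra. The only other point requiring genuine care is the monotonicity argument in part~(iii) that identifies $V(H^c)+\alpha t(H)$, rather than the consistency, as the robustness precisely on the range $\alpha\ge\alpha^*$ — this is what makes the linear relation hold on that range and fail below it.
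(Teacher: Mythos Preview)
Your proof is correct and follows essentially the same approach as the paper: condition on the two orderings for part~(i), substitute $\alpha^*$ and invoke Lemma~\ref{lem:V-identity} for part~(ii), and use monotonicity plus Lemma~\ref{lem:V-identity} for part~(iii). Your explicit justification of the equalizing property $u(s_Z,i)=V(Z)$ via Lemma~\ref{lem:hiding} is in fact more careful than the paper, which simply appeals to Theorem~\ref{thm:box} at that step.
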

\begin{proof}
Part (i) is immediate, by definition of $s^*(\alpha)$. Indeed, for boxes $i \in H$, the strategy $s^*(\alpha)$ searches all the boxes in $H^c$ first with probability $1-\alpha$, taking time $t(H^c)$, then it performs an optimal search of $H$, which finds the target in additional time $V(H)$, by Theorem~\ref{thm:box}. The expression for $u(s^*(\alpha),j)$ is derived similarly.

For part (ii), we first calculate $u(s^*(\alpha^*),i)$ and $u(s^*(\alpha^*),j)$ for $i \in H$ and $j \in H^c$ using part (i) and the definition of $\alpha^*$.
\begin{align*}
u(s^*(\alpha^*), i) &= V(H) + \left( \frac{V(Y)-V(H)}{t(H^c)} \right) t(H^c) = V(Y) \text{ and} \\
u(s^*(\alpha^*), j) &= V(H^c) + \left( 1- \frac{V(Y)-V(H)}{t(H^c)} \right) t(H) = \frac{V(Y)t(Y)-V(Y)t(H)}{t(H^c)} = V(Y).
\end{align*}
where the penultimate equality uses~(\ref{eq:V}) and the final equality uses $t(Y)=t(H)+t(H^c)$. It follows that the consistency and robustness of $s^*(\alpha^*)$ are both equal to $V(Y)$.

Finally, for part (iii), we note that $u(s^*(\alpha),i)$ is decreasing in $\alpha$ for $i \in H$, and $u(s^*(\alpha),j)$ is increasing in $\alpha$ for $j \in H^c$. By part (ii), if follows that for all $\alpha^* \le \alpha \le 1$, we have $u(s^*(\alpha),i) \le u(s^*(\alpha),j)$ for $i \in H, j \in H^c$, so the consistency of $s^*(\alpha)$ is equal to $u(s^*(\alpha),i)$ and the robustness of $s^*(\alpha^*)$ is equal to $u(s^*(\alpha),j)$. Therefore, by part (i), for $\alpha^* \le \alpha \le 1$,
\begin{align*}
t(H) C(s^*(\alpha)) + t(H^c) R(s^*(\alpha)) &= t(H)(V(H) + (1-\alpha)t(H^c)) + t(H^c)(V(H^c) + \alpha t(H)) \\
&= t(Y)V(Y),
\end{align*}
by~(\ref{eq:V}).
\end{proof}

It is interesting to point out that the strategy $s^*(\alpha^*)$ is a min-max search strategy for the standard box search problem (with no prediction), since every box is found in expected time $V(Y)$.

We now prove a tight lower bound, by showing the the linear relation satisfied by the robustness and consistency or $s^*(\alpha)$ holds as an inequality for any search $s$.

\begin{lemma} \label{lem:box-lb}
The robustness and consistency of any search strategy $s$ satisfy
\[
t(H) C(s) + t(H^c) R(s) \ge t(Y) V(Y).
\]
\end{lemma}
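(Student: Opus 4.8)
The plan is to use the minimax/LP-duality idea hinted at in the Preliminaries and in the discussion preceding this lemma: to lower-bound the quantity $t(H)C(s)+t(H^c)R(s)$, I want to exhibit specific Hider strategies against which $s$ must do poorly, and then combine them with the right weights. The natural candidates are the two optimal Hider distributions from Lemma~\ref{lem:hiding}: the hiding distribution $h^*_H$ supported on $H$ (hiding in box $j\in H$ with probability $t_j/t(H)$) and the hiding distribution $h^*_{H^c}$ supported on $H^c$ (hiding in box $j\in H^c$ with probability $t_j/t(H^c)$).

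First I would observe that, since $h^*_H$ is supported entirely on $H$, it is a legitimate Hider strategy inside the restricted game on $H$, so by definition of consistency we have $C(s)\ge u(s,h^*_H)$. Likewise $h^*_{H^c}$ is supported on $Y$, so $R(s)\ge u(s,h^*_{H^c})$. The key step is then to estimate $u(s,h^*_H)$ and $u(s,h^*_{H^c})$ from below for an \emph{arbitrary} pure search $S$ (a permutation), and to conclude the bound for mixed $s$ by linearity of the payoff in the Hider's distribution together with averaging over $s$. So the real content is the pure-strategy inequality
\begin{align*}
t(H)\,u(S,h^*_H)+t(H^c)\,u(S,h^*_{H^c})\ \ge\ t(Y)V(Y),
\end{align*}
which, once established, passes to mixed $s$ by taking expectations and using $C(s)\ge \mathbb{E}[u(S,h^*_H)]$, $R(s)\ge \mathbb{E}[u(S,h^*_{H^c})]$.

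To prove the pure-strategy inequality I would fix a permutation $S$ and compute the two weighted search times directly. Since $u(S,h^*_Y)=V(Y)$ for \emph{every} permutation $S$ by Lemma~\ref{lem:hiding} applied to $Y$, and since $h^*_Y$ decomposes as $h^*_Y=\frac{t(H)}{t(Y)}h^*_H+\frac{t(H^c)}{t(Y)}h^*_{H^c}$, I get $t(H)u(S,h^*_H)+t(H^c)u(S,h^*_{H^c})=t(Y)V(Y)$ as an \emph{equality} for the true expected search times. The subtlety — and the main obstacle — is that the definition of $C$ restricts the Hider to $H$, and a search $S$ that is optimal against $h^*_H$ \emph{restricted to $H$} may waste time opening $H^c$-boxes before finding the Hider; that wasted time is exactly what is charged differently against $h^*_H$ (where boxes of $H^c$ opened early cost the searcher) versus against $h^^*_{H^c}$. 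I expect the clean way to handle this is the convexity/decomposition argument: writing $u(S,h^*_H)$ as the conditional expected search time given the Hider is in $H$ and similarly for $H^c$, the decomposition of $h^*_Y$ above yields the equality for the \emph{full} payoff, and the consistency/robustness inequalities $C(s)\ge u(s,h^*_H)$, $R(s)\ge u(s,h^*_{H^c})$ then give the claimed $\ge$.

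The step I expect to be delicate is verifying that $u(S,h^*_H)$ really coincides with the conditional search time that appears in the decomposition of $u(S,h^*_Y)$ — i.e.\ that interleaving $H$-boxes and $H^c$-boxes in an arbitrary order does not break the additive identity. I would confirm this by writing $u(S,h^*_Y)=\sum_j h^*_Y(j)u(S,j)$ and splitting the sum over $j\in H$ and $j\in H^c$, which gives $u(S,h^*_Y)=\frac{t(H)}{t(Y)}u(S,h^*_H)+\frac{t(H^c)}{t(Y)}u(S,h^*_{H^c})$ purely by grouping terms, with no rearrangement of $S$ needed. Multiplying through by $t(Y)$ and invoking $u(S,h^*_Y)=V(Y)$ (Lemma~\ref{lem:hiding}) yields the equality, after which the consistency and robustness bounds finish the proof.
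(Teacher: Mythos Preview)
Your proposal is correct and follows essentially the same route as the paper: decompose $h^*_Y$ as $\frac{t(H)}{t(Y)}h^*_H+\frac{t(H^c)}{t(Y)}h^*_{H^c}$, invoke Lemma~\ref{lem:hiding} to get $u(s,h^*_Y)=V(Y)$, and then bound $u(s,h^*_H)\le C(s)$ and $u(s,h^*_{H^c})\le R(s)$. The only difference is that the paper applies Lemma~\ref{lem:hiding} directly to the mixed strategy $s$, so your detour through pure strategies and the ``subtlety'' paragraph are unnecessary---the splitting $u(s,h^*_Y)=\frac{t(H)}{t(Y)}u(s,h^*_H)+\frac{t(H^c)}{t(Y)}u(s,h^*_{H^c})$ is pure linearity in the Hider's mixed strategy and holds for any $s$.
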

\begin{proof}
Let $s$ be an arbitrary mixed search strategy. By Lemma~\ref{lem:hiding}, we must have $u(s, h^*_Y) = V(Y)$. We will now write $u(s, h^*_Y)$ in a different way, using the fact that the distribution $h^*_{H}$ is equal to the distribution $h^*_{Y}$ conditional on the target lying in $H$, and similarly for $h^*_{H^c}$. So,
\[
V(Y)=u(s, h^*_Y) = h^*_Y(H) u(s,h^*_H) + h^*_Y(H^c) u(s,h^*_{H^c}).
\]
By the definitions of consistency and robustness, we must have $C(s) \ge u(s,h^*_H) $ and $R(s) \ge u(s,h^*_{H^c})$. Combining with the  identities $h^*_Y(H) = t(H)/t(Y)$ and $h^*_Y(H^c) = t(H^c)/t(Y)$, we obtain
\[
V(Y) \le \frac{t(H)}{t(Y)} C(s) +  \frac{t(H)^c}{t(Y)} R(s).
\]
The lemma follows.
\end{proof}

Combining the above lemmas, we can characterize the Pareto frontier of the game.
\begin{theorem} 
For a given prediction $H$, the Pareto frontier $P$ 
is given by the line segment
\[
P=\bigl\{ (c,r): t(H) c + t(H^c) r = t(Y) V(Y),~ V(H) \le c \le V(Y)  \bigl\}.
\]
\end{theorem}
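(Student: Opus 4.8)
The plan is to pin down $P$ via two matching inclusions: every pair on the stated segment is realized by a Pareto-optimal strategy, and conversely every Pareto-optimal achievable pair lies on the segment. Both directions will rest on the facts already in hand, namely Lemma~\ref{lem:P-opt} (which supplies both the achievability and the linear relation along the family $s^*(\alpha)$) and Lemma~\ref{lem:box-lb} (the matching linear lower bound), supplemented by the elementary bounds on consistency and robustness noted in Section~\ref{sec:preliminaries}. Throughout I will use that $t(H),t(H^c)>0$, which holds because $H$ is a proper nonempty subset.

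For achievability, I would show that as $\alpha$ ranges over $[\alpha^*,1]$ the strategies $s^*(\alpha)$ trace the segment. By Lemma~\ref{lem:P-opt}(i), $C(s^*(\alpha))=V(H)+(1-\alpha)t(H^c)$ is continuous and strictly decreasing in $\alpha$; evaluating at $\alpha=\alpha^*$ (via Lemma~\ref{lem:P-opt}(ii)) and at $\alpha=1$ (where $u(s^*(1),i)=V(H)$) shows that $C$ sweeps the entire interval $[V(H),V(Y)]$, while part~(iii) places each pair $(C(s^*(\alpha)),R(s^*(\alpha)))$ on the line $t(H)c+t(H^c)r=t(Y)V(Y)$. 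Hence the family realizes every point of the segment. Each such point is Pareto optimal, since a dominating strategy $s'$ with $C(s')\le c$, $R(s')\le r$ and one strict inequality would force $t(H)C(s')+t(H^c)R(s')<t(Y)V(Y)$, contradicting Lemma~\ref{lem:box-lb}.

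For the converse, I would take an arbitrary Pareto-optimal pair $(c,r)=(C(s),R(s))$. The observations in Section~\ref{sec:preliminaries} give $c\ge V(H)$ (the value of the game restricted to $H$) and, because the pair is on the frontier, $c\le V(Y)$; thus $c\in[V(H),V(Y)]$. By the achievability step there is a segment point $(c,r')$ with the same consistency, realized by some $s^*(\alpha)$, where $r'=(t(Y)V(Y)-t(H)c)/t(H^c)$. Lemma~\ref{lem:box-lb} applied to $s$ yields $r\ge r'$, and were the inequality strict then $(c,r')$ would dominate $(c,r)$, contradicting Pareto optimality. Therefore $r=r'$ and $(c,r)$ lies on the segment, which completes the characterization.

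I expect the only delicate point to be the converse direction, where the argument hinges on the segment point sharing an arbitrary Pareto-optimal consistency being genuinely achievable; this is precisely why I would first establish that $C(s^*(\alpha))$ sweeps the full range $[V(H),V(Y)]$, rather than just attaining its endpoints. The upper bound $c\le V(Y)$ for frontier points, noted in Section~\ref{sec:preliminaries}, confines the characterization to the stated range and rules out Pareto-optimal pairs of larger consistency, so I would invoke it explicitly; the lower end $c\ge V(H)$ comes from $V(H)$ being the value of the game restricted to $H$.
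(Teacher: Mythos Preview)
Your proposal is correct and is exactly the argument the paper intends: the paper gives no separate proof of the theorem beyond the sentence ``Combining the above lemmas, we can characterize the Pareto frontier of the game,'' and your write-up is a faithful fleshing-out of that combination, using Lemma~\ref{lem:P-opt} for achievability and the linear relation, Lemma~\ref{lem:box-lb} for the matching lower bound, and the general observations of Section~\ref{sec:preliminaries} for the endpoint constraints $V(H)\le c\le V(Y)$.
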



\begin{example}
Figure~\ref{fig:tradeoff1} illustrates the Pareto frontier for $n=2$, $t_1=t_2=1$, and $H=\{1\}$. In this case, the consistency must lie between $V(H)=1$ and $V(Y)=1.5$. We mark both the lines $c=1$ and $c=1.5$ on the graph with a dotted line. 
\label{ex:example1}
\end{example}

\begin{figure}[ht!]
\centering
\includegraphics[scale=0.2]{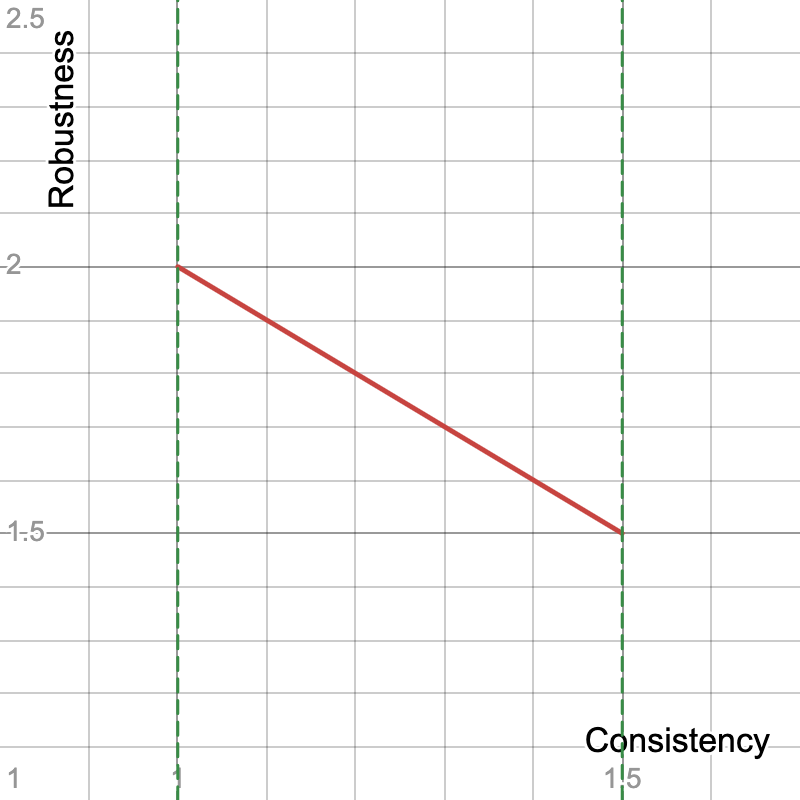}
\caption{An illustration of the Pareto frontier for the search game with perfect detection.}
\label{fig:tradeoff1}
\end{figure}

\subsection{Box search with imperfect detection} \label{sec:box-infinite}

We now turn to the second, and more general variant of our game. Here, each time a box containing the Hider is searched, the Hider is independently found with probability $q$, where $0<q<1$. The Hider's strategy set $Y$ remains the same, but the Searcher's strategy set $X$ is now the set $Y^\infty$ of all {\em infinite} sequences of boxes. We allow the Searcher to randomize between any {\em finite} set of strategies 
from~$X$. A solution to the standard game was given in~\cite{bui2024optimal}, and we state the result below. The result provides a connection between the two standard games, with perfect and imperfect detection.

\begin{theorem}[\cite{bui2024optimal}] 
Given an optimal strategy for box search with perfect detection, and any given permutation $\sigma$ of $Y$, let 
$\theta(\sigma)$ denote the probability
with which the strategy chooses $\sigma$.
Then, for box search with imperfect detection, the strategy that repeats $\sigma$ indefinitely with probability $\theta(\sigma)$ is optimal for the Searcher. The expected search time $V_q$ of this strategy (equal to the value of the game) is
\[
V_q(Y)=\frac{t(Y)}{q}-\frac{\sum_{1\le i < j \le n} t_i t_j}{t(Y)}.
\]
\end{theorem}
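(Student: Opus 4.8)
\emph{Overall approach.} The plan is to exhibit a saddle point: show that the proposed Searcher strategy guarantees expected time at most $V_q(Y)$ against every box (an upper bound on the value), and that a suitable Hider strategy guarantees at least $V_q(Y)$ against every Searcher sequence (a matching lower bound). Throughout I write $u_q$ for the payoff (expected total search time) in the imperfect-detection game, and recall from the perfect-detection analysis that $h^*_Y(j)=t_j/t(Y)$ and that, by Lemma~\ref{lem:hiding}, every perfect-detection search has expected time $V(Y)$ against $h^*_Y$.

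\emph{Upper bound.} The key observation is a clean reduction to the perfect-detection game. First I would compute, for a single permutation $\sigma$ repeated indefinitely and a Hider in box $j$ sitting at position $k=\sigma^{-1}(j)$, the expected search time. Box $j$ is examined at cumulative times $m\, t(Y) + T_k$ for $m=0,1,2,\ldots$, where $T_k=\sum_{i\le k} t_{\sigma_i}$, and it is detected on the $(m{+}1)$-st examination with probability $(1-q)^m q$. Summing the geometric series gives
\[
u_q(\sigma^{\infty}, j) \;=\; \frac{(1-q)\,t(Y)}{q} \;+\; T_k \;=\; \frac{(1-q)\,t(Y)}{q} \;+\; u(\sigma, j),
\]
so the imperfect-detection time equals the perfect-detection time plus a constant independent of $j$ and $\sigma$. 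Averaging this identity over $\sigma$ with the weights $\theta(\sigma)$ of an optimal perfect-detection strategy, and using that this strategy is equalizing at $V(Y)$ (the remark following Lemma~\ref{lem:P-opt}), yields $u_q(\cdot, j)=\frac{(1-q)t(Y)}{q}+V(Y)$ for every $j$. A direct simplification using $t(Y)^2=t^2(Y)+2\sum_{i<j}t_it_j$ confirms that this constant equals $V_q(Y)$, so the proposed strategy is equalizing and the value is at most $V_q(Y)$.

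\emph{Lower bound.} I would show that the Hider strategy $h^*_Y$ already forces $V_q(Y)$. Writing $c_i(j)$ for the number of searches of box $j$ strictly before step $i$, a pure Searcher sequence $S=(S_1,S_2,\ldots)$ satisfies $u_q(S,j)=\sum_{i\ge 1} t_{S_i}(1-q)^{c_i(j)}$, and hence
\[
t(Y)\, u_q(S, h^*_Y) \;=\; \sum_{i\ge 1} t_{S_i}\,\Phi_i, \qquad \Phi_i \;=\; \sum_{j} t_j (1-q)^{c_i(j)},
\]
where the potential satisfies $\Phi_1=t(Y)$, $\Phi_i\downarrow 0$, and $\Phi_i-\Phi_{i+1}=q\, t_{S_i}(1-q)^{c_i(S_i)}$. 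Minimizing $\sum_i t_{S_i}\Phi_i$ is then a scheduling problem: regard the $(m{+}1)$-st visit to box $k$ as a job of processing time $t_k$ that lowers the potential by $q\,t_k(1-q)^m$, so each unit of time spent on it is ``worth'' $q(1-q)^m$. A Smith-type exchange argument shows that swapping two adjacent jobs improves the objective exactly when the earlier one has the smaller value-to-time ratio; since the ratio $q(1-q)^m$ depends only on the visit index $m$ and is strictly decreasing in $m$, every optimal schedule groups all $0$-th visits first, then all first visits, and so on, i.e.\ a round-robin. Any round-robin coincides with some $\sigma^{\infty}$, which by the identity above achieves exactly $V_q(Y)$ against $h^*_Y$; hence $u_q(S,h^*_Y)\ge V_q(Y)$ for all $S$, matching the upper bound.

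\emph{Main obstacle.} The hard part will be making the exchange argument rigorous over an infinite horizon: one must argue that sequences not searching every box infinitely often incur infinite cost (so are irrelevant), that $\sum_i t_{S_i}\Phi_i$ converges, and that the local (adjacent-swap) optimality of the round-robin order upgrades to a genuine global minimum in the limit. A careful truncation/limiting argument, or a direct appeal to the optimality of the Smith ordering for the limiting weighted-completion objective, should close this gap.
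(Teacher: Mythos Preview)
The paper does not supply its own proof of this theorem; it is quoted from \cite{bui2024optimal} and used as a black box, so there is no in-paper argument to compare yours against. On its own merits your plan is correct and essentially complete.

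Two minor points. For the upper bound, the equalizing property you invoke is stated in the remark after Lemma~\ref{lem:P-opt} only for the particular strategy $s^*(\alpha^*)$, whereas the theorem is phrased for \emph{any} optimal perfect-detection strategy. The gap is easy to close: by Lemma~\ref{lem:hiding} the Hider strategy $h^*_Y$ has full support and makes every search achieve exactly $V(Y)$, so any min-max Searcher strategy $s$ must have $u(s,j)=V(Y)$ for every $j$ (otherwise $\max_j u(s,j)>V(Y)$ would follow from $\sum_j h^*_Y(j)u(s,j)=V(Y)$). For the lower bound, your rewriting $\sum_i t_{S_i}\Phi_i=\sum_\ell(\Phi_\ell-\Phi_{\ell+1})\tau_\ell$ makes the problem a weighted-completion-time instance with jobs $(k,m)$ of processing time $t_k$ and weight $q t_k(1-q)^m$, subject to the chain precedences $(k,m)\prec(k,m{+}1)$; since the Smith ratios $q(1-q)^m$ respect these chains, the unconstrained Smith order is feasible and hence optimal. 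The infinite-horizon concern is genuine but routine: any sequence that omits some box has $\Phi_i$ bounded away from $0$ and thus infinite cost, and for the remaining sequences a truncation to the first $Nn$ visits reduces to the finite Smith instance, whose optimum (any round-robin) converges to $t(Y)V_q(Y)$ as $N\to\infty$.
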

Note that 
$V_1(Y) = \frac{t(Y)^2- \sum_{1\le i < j \le n} t_i t_j}{t(Y)} 
= \frac{t^2(Y) + t(Y)^2}{2t(Y)}$,
and we recover the value of the game for perfect detection given in~(\ref{eq:box-value}). Thus, we can write the value of the game for imperfect detection as
\begin{align}
V_q(Y) &= \frac{(1-q)t(Y)}{q} + t(Y)-\frac{\sum_{1\le i < j \le n} t_i t_j}{t(Y)} 
= V_1(Y) + \frac{(1-q)t(Y)}{q} . \label{eq:Vq}
\end{align}

As for perfect detection, a prediction is given by a proper subset $H \subset [n]$, and we assume that $H$ is some such fixed subset.
We will use a simple lemma that is analogous to Lemma~\ref{lem:V-identity}.
\begin{lemma}\label{lem:Vq-identity}
For any $H \subseteq Y$,
\begin{align} 
t(Y) V_q(Y) &= t(H) V_q(H) + t(H^c) V_q(H^c) + \frac{2-q}{q}t(H)t(H^c), \label{eq:Vq-identity}
\end{align}
\end{lemma}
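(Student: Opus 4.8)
The plan is to reduce this identity to the perfect-detection identity of Lemma~\ref{lem:V-identity}, exploiting the clean additive relationship between $V_q$ and $V_1 = V$ recorded in~(\ref{eq:Vq}). Recall that for any subset $Z \subseteq Y$ we have $V_q(Z) = V(Z) + (1-q)t(Z)/q$, so that multiplying through by $t(Z)$ gives $t(Z)V_q(Z) = t(Z)V(Z) + (1-q)t(Z)^2/q$. Applying this with $Z = Y, H, H^c$ converts every $V_q$ term appearing in~(\ref{eq:Vq-identity}) into a $V$ term plus an explicit quadratic in the $t$-values.

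First I would expand both sides in this way. The left-hand side becomes $t(Y)V(Y) + (1-q)t(Y)^2/q$, while the first two terms on the right become $t(H)V(H) + t(H^c)V(H^c) + (1-q)\bigl(t(H)^2 + t(H^c)^2\bigr)/q$. At this point Lemma~\ref{lem:V-identity} eliminates the $V$-terms: it lets me replace $t(H)V(H) + t(H^c)V(H^c)$ by $t(Y)V(Y) - t(H)t(H^c)$, so the $t(Y)V(Y)$ contributions match on both sides and the whole statement collapses to a purely algebraic identity among products of the $t$-values.

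The remaining step is to verify this residual identity. Using the partition fact $t(Y) = t(H) + t(H^c)$, hence $t(Y)^2 = t(H)^2 + t(H^c)^2 + 2t(H)t(H^c)$, the $q$-dependent quadratic terms reduce to a single multiple of the cross-term $t(H)t(H^c)$, and one checks that $-1 + (2-q)/q = 2(1-q)/q$, which is exactly the coefficient needed for the cross-terms to balance. This is the only place any real care is required: the main (and admittedly minor) obstacle is simply bookkeeping the coefficient $(2-q)/q$ correctly so that the cross-term cancels, rather than any conceptual difficulty. As an alternative that sidesteps invoking Lemma~\ref{lem:V-identity}, I could instead substitute the closed form $t(Z)V_q(Z) = t(Z)^2/q - \tfrac12\bigl(t(Z)^2 - t^2(Z)\bigr)$ directly for each of $Y, H, H^c$ and again use the additivity $t^2(Y) = t^2(H) + t^2(H^c)$ together with the expansion of $t(Y)^2$; both routes terminate in the same one-line coefficient check.
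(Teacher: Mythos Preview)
Your proposal is correct and follows essentially the same approach as the paper: both arguments invoke~(\ref{eq:Vq}) to write each $V_q(Z)$ as $V_1(Z)+(1-q)t(Z)/q$, apply the perfect-detection identity~(\ref{eq:V}) (Lemma~\ref{lem:V-identity}), and then use $t(Y)=t(H)+t(H^c)$ to match the remaining quadratic terms in the $t$-values. The only cosmetic difference is that the paper starts from the left-hand side and rewrites it into the right-hand side in one chain, whereas you expand both sides separately before comparing.
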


\begin{proof}
Using~(\ref{eq:Vq}),~(\ref{eq:V}) and the identity $t(Y)=t(H)+t(H^c)$, we can write the left-hand side of~(\ref{eq:Vq-identity}) as
\begin{align*}
t(Y) V_q(Y) &= t(H)\left(V_1(H) + \frac{1-q}{q}t(H)\right) + t(H^c)\left(V_1(H^c) + \frac{1-q}{q}t(H^c)\right) + \frac{2-q}{q}t(H)t(H^c) \\
&= t(H) V_q(H) + t(H^c) V_q(H^c) + \frac{2-q}{q}t(H)t(H^c),
\end{align*}
again using~(\ref{eq:Vq}), with $Y$ replaced with $H$ and $H^c$ respectively. 
\end{proof}

 We now define a set of search strategies which we will show are Pareto optimal.
 Recall the definition of $s_H$, $s_{H^c}$, as given in Section˜\ref{sec:box-finite}.
 
\begin{definition} For $k=0,1,\ldots$, let $s^k$ be the search which begins by performing the search $s_H$ a total of $k$ times before repeatedly alternating between $s_{H^c}$ and $s_H$. 
For $0 \le \beta \le 1$, let $s^k(\beta)$ be the search that chooses $s^k$ with probability $1-\beta$ and chooses $s^{k+1}$ with probability $\beta$. 
\end{definition}

Our next lemma mirrors Lemma~\ref{lem:P-opt} from Subsection~\ref{sec:box-finite}. Part (iii) shows how the consistency and robustness of the strategies $s^k(\beta)$ are related. For each $k=0,1,\ldots$, the consistency and robustness satisfy a different, and more complex linear relation.

\begin{lemma}\label{lem:tradeoff}
Let
$
 \beta^* \equiv \frac{1}{q}-\frac{V_q(Y)-V_q(H)}{t(H^c)}.
$
Then
\begin{enumerate}
\item[(i)] For $i \in H, j \in H^c$, $k=0,1,\ldots$ and any $0 \le \beta \le 1$,
\begin{align*}
u(s^k(\beta),i)&= V_q(H)  + \left( \frac{1}{q} - \beta \right)  (1-q)^k t(H^c) \text{ and } \\
u(s^k(\beta),j) &=  V_q(H^c) + \left(\beta + k+ \frac{1-q}{q} \right)t(H);
\end{align*}
\item[(ii)] $C(s_0(\beta^*))= R(s_0(\beta^*)) = V_q(Y)$;
\item[(iii)] for $k = 1,2,\ldots$ and  $0 \le \beta \le 1$, and $k=0$ and $\beta^* \le \beta \le 1$,
\begin{align}
\frac{t(H)}{(1-q)^k} C(s^k(\beta)) + t(H^c) R(s^k(\beta)) = t(Y)V_q(Y) + \frac{1-(1-q)^k}{(1-q)^k} t(H)V_q(H) + kt(H)t(H^c), \label{eq:C-R}
\end{align}
\end{enumerate}
\end{lemma}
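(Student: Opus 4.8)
The plan is to prove the three parts in order, mirroring the structure of the proof of Lemma~\ref{lem:P-opt} but accounting for the infinite, repeating nature of the strategies in the imperfect-detection setting. For part~(i), I would compute the expected search times directly from the definition of $s^k(\beta)$. Consider a box $i \in H$. The strategy $s^k$ searches $H$ a total of $k$ times up front, and each pass through $H$ detects a Hider located in $i$ with some probability depending on $q$; conditioning on survival through these initial passes, the remaining search is an alternation starting with $s_{H^c}$. The key computation is to track the geometric discount factor: each complete pass through $H$ that fails to detect the Hider in box $i$ occurs with a probability involving $(1-q)$, and summing the resulting geometric series should produce the factor $(1-q)^k$ multiplying $t(H^c)$, together with the base cost $V_q(H)$ coming from the eventual optimal repeated search of $H$. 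For $j \in H^c$, the Hider is only examined starting after the $k$ initial passes through $H$, so the expected time picks up an additive $\bigl(k + \tfrac{1-q}{q}\bigr)t(H)$ term (the $k$ guaranteed passes plus the expected extra passes through $H$ in the alternation phase), plus $\beta t(H)$ from the randomization, plus the base $V_q(H^c)$. I would lean on Theorem~\ref{bui2024optimal} (the repeated-permutation optimality result) to identify the $V_q(H)$ and $V_q(H^c)$ terms cleanly, rather than resumming from scratch.

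For part~(ii), I would simply substitute $k=0$ and $\beta=\beta^*$ into the two formulas from part~(i) and verify that both evaluate to $V_q(Y)$. For $i \in H$, plugging $\beta^*=\tfrac{1}{q}-\tfrac{V_q(Y)-V_q(H)}{t(H^c)}$ into $u(s^0(\beta^*),i)=V_q(H)+\bigl(\tfrac1q-\beta^*\bigr)t(H^c)$ immediately telescopes to $V_q(Y)$. For $j \in H^c$, the identity~(\ref{eq:Vq-identity}) is exactly what is needed to reduce $V_q(H^c)+\bigl(\beta^*+\tfrac{1-q}{q}\bigr)t(H)$ to $V_q(Y)$, playing the same role that~(\ref{eq:V}) played in the perfect-detection case. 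The conclusion that both consistency and robustness equal $V_q(Y)$ then follows because all boxes yield the same expected search time at this parameter choice.

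For part~(iii), the structure is: establish which of $u(s^k(\beta),i)$ and $u(s^k(\beta),j)$ dominates so that $C$ and $R$ can be identified with the $H$-value and $H^c$-value respectively, then substitute the expressions from part~(i) into the left-hand side of~(\ref{eq:C-R}) and simplify using~(\ref{eq:Vq-identity}). Monotonicity is again the tool: $u(s^k(\beta),i)$ is decreasing in $\beta$ while $u(s^k(\beta),j)$ is increasing in $\beta$, and I would need the boundary case of part~(ii) to anchor the comparison, establishing that for the stated ranges ($k\ge 1$ with any $\beta$, or $k=0$ with $\beta\ge\beta^*$) we have $u(s^k(\beta),i)\le u(s^k(\beta),j)$, so that $C(s^k(\beta))=u(s^k(\beta),i)$ and $R(s^k(\beta))=u(s^k(\beta),j)$. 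The main obstacle I anticipate is part~(iii): unlike the perfect-detection case where the factors were symmetric, here the coefficient $\tfrac{t(H)}{(1-q)^k}$ and the extra terms $\tfrac{1-(1-q)^k}{(1-q)^k}t(H)V_q(H)+kt(H)t(H^c)$ must emerge exactly from the algebra. After multiplying $C(s^k(\beta))$ by $\tfrac{t(H)}{(1-q)^k}$, the $(1-q)^k$ in $u(s^k(\beta),i)$ cancels against the denominator, the $\beta$-dependent terms must cancel between the two summands, and the residual constant must be massaged via~(\ref{eq:Vq-identity}) into the claimed right-hand side. Verifying that these several terms combine precisely is the delicate bookkeeping step, and I would double-check the $k$ and $(1-q)^k$ factors most carefully there.
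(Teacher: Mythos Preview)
Your proposal is correct and follows essentially the same approach as the paper: direct computation of $u(s^k,\cdot)$ via geometric series for part~(i), substitution plus the identity~(\ref{eq:Vq-identity}) for part~(ii), and the monotonicity-in-$\beta$ argument anchored at part~(ii) for part~(iii). The one step you leave slightly implicit is the continuity $s^k(1)=s^{k+1}(0)$, which the paper invokes explicitly to propagate the comparison $u(s^k(\beta),i)\le u(s^k(\beta),j)$ from $k=0$ across to all $k\ge 1$; make sure to state this when you write the argument out.
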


\begin{proof}
Starting with part (i), under $s^k$, the expected search time of an element $i \in H$ is
\[
u(s^k, i)  = V_q(H)  +   \sum_{\ell=k}^\infty (1-q)^{\ell}t(H^c)  = V_q(H) + \frac{(1-q)^k}{q} t(H^c).
\]
Therefore, the expected search time of $i\in H$ under $s^k(\beta)$ is
\begin{align*}
u(s^k(\beta), i) & = V_q(H)  + (1-\beta)\frac{(1-q)^k}{q} t(H^c) + \beta \frac{(1-q)^{k+1}}{q} t(H^c) \\
&= V_q(H)  + \left( \frac{1}{q} -  \beta \right)  (1-q)^k t(H^c) .
\end{align*}

The expected search time of an element $j \in H^c$ is
\begin{align*}
u(s^k, j) & = V_q(H^c) + \left(k+ \sum_{\ell=1}^{\infty} (1-q)^\ell \right)t(H) =V_q(H^c) + \left(k+ \frac{1-q}{q} \right)t(H).
\end{align*}
Therefore, the expected search time of $j \in H^c$ under $s^k(\alpha)$ is
\begin{align*}
u(s^k(\beta), j) & = V_q(H^c) + (1-\beta) \left(k+ \frac{1-q}{q} \right)t(H) + \beta  \left(k+1+ \frac{1-q}{q} \right)t(H) \\
&= V_q(H^c) + \left(\beta + k+ \frac{1-q}{q} \right)t(H).
\end{align*}

For part (ii), we use part (i) and the definition of $\beta^*$ to calculate $u(s_0(\beta^*),i)$ for $i \in H$.
\[
u(s_0(\beta^*),i) = V_q(H)  + \left( \frac{1}{q} - q \left(\frac{1}{q}-\frac{V_q(Y)-V_q(H)}{t(H^c)} \right) \right) t(H^c)  = V_q(Y),
\]
For $j \in H^c$, we have
\begin{align*}
u(s_0(\beta^*),j) &=  V_q(H^c) + \left(\frac{1}{q}-\frac{V_q(Y)-V_q(H)}{t(H^c)} + \frac{1-q}{q} \right)t(H) \\
&= \frac{1}{t(H^c)} \left(t(H) V_q(H) + t(H^c)V_q(H^c) + \frac{2-q}{q} t(H)t(H^c) - t(H) V_q(Y) \right) \\
&= V_q(Y),
\end{align*}
using~(\ref{eq:Vq-identity}) and $t(Y)=t(H)+t(H^c)$.  It follows that the consistency and robustness of $s^k(\beta^*)$ are both equal to $V_q(Y)$.

Finally, for part (iii), we note that $u(s^k(\beta),i)$ is decreasing in $\beta$ for $i \in H$, and $u(s^k(\beta),j)$ is increasing in $\beta$ for $j \in H^c$. Also, $s^k(1)=s^{k+1}(0)$ for all $k=0,1,\ldots$. By part (ii), if follows that for all $\beta^* \le \beta \le 1$, we have $u(s_0(\beta),i) \le u(s_0(\beta),j)$ for $i \in H, j \in H^c$; also for all $0 \le \beta \le 1$ and $k=1,2,\ldots$, we have $u(s^k(\beta),i) \le u(s^k(\beta),j)$ for $i \in H, j \in H^c$. So, the consistency of $s^k(\beta)$ is equal to $u(s^k(\beta),i)$ and the robustness of $s^k(\beta)$ is equal to $u(s(\beta),j)$ for $\beta^* \le \beta \le 1$ and $k=0$, and for $0 \le \beta \le 1$ and $k=1,2,\ldots$. Therefore, by part (i), for $\beta^* \le \beta \le 1$ and $k=0$, and for $0 \le \beta \le 1$ and $k=1,2,\ldots$, we have
\begin{align*}
\frac{t(H)}{(1-q)^k} C(s^k(\beta)) + t(H^c) R(s^k(\beta)) &= \frac{t(H)}{(1-q)^k} \left( V_q(H)  + \left( \frac{1}{q} -  \beta \right)  (1-q)^k t(H^c) \right) \\
&\quad + t(H^c) \left( V_q(H^c) + \left(\beta + k+ \frac{1-q}{q} \right)t(H)  \right) \\
&=t(Y)V_q(Y) + \frac{1-(1-q)^k}{(1-q)^k} t(H)V_q(H) + kt(H)t(H^c),
\end{align*}
by~(\ref{eq:Vq-identity}). This completes the proof.
\end{proof}

Proving a tight lower bound is intricate. For the finite game of Subsection~\ref{sec:box-finite}, we gave a lower bound on the linear combination of consistency and robustness by exploiting the fact that every search strategy had the same expected search time against the hiding strategy $h_Y^*$. For the game of this section, this is not true, and we will prove a lower bound corresponding to~(\ref{eq:C-R}) using a family of Hider strategies $h^k,~k=0,1,\ldots$. We will show that a best response to $h^k$ is a search that cycles $k$ times through the boxes in $H$ before repeatedly cycling through all the boxes in $Y$.

Before formally defining $h^k$, we need to give some background on finding best responses to fixed hiding distributions, which will also motivate the choice of $h^k$. Suppose the probability the hider is in box $j$ is known to be $h(j)$ for $j=1,\ldots,n$. Then the problem of finding a search with minimum expected search time is well understood (see for example~\cite{bui2024optimal}). The optimal strategy can be found by first choosing a box $j$ (of which there may be several) that maximizes the ratio $h(j) /t_j$. Subsequently, every time a box is searched, the hiding probabilities are updated using Bayes' law, and the next box to be searched is chosen by repeating the calculation. Clearly, if the hiding distribution is $h^*_Y$, then the ratios $h^*_Y(j)/t_j$ are all equal, so it is optimal to start by searching all $n$ boxes exactly once in any order. The hiding distribution then returns to $h^*_Y$, after Bayesian updating, so an optimal search strategy repeatedly cycles through the boxes in any order.

With this in mind, we formally define the hiding distribution $h^k$ as 
\[
h^k(j) = \begin{cases}
\frac{\lambda_k t_j}{(1-q)^k} \text{ for $j \in H$ and} \\
\lambda_k t_j \text{ for } j \in H^c,
\end{cases}
\]
where $\lambda_k = (t(H)/(1-q)^k + t(H^c))^{-1}$ is a normalizing factor that ensures $h^k$ is a probability distribution. The strategy $h^k$ is designed so that  any optimal response to it begins with $k$ ``rounds'' of searching all the boxes in $H$ in any order, after which, updating the hiding probabilities using Bayes' law, the hiding distribution becomes $h^*_Y$. Thus, an optimal response to $h^k$ continues by repeatedly cycling through all the boxes of $Y$ in any order after those first $k$ rounds. We can calculate the expected search time of an optimal response to $h^k$.

\begin{lemma}\label{lem:hk-lb}
The minimum expected search time against the hiding distribution $h^k$ is given by
\[
\lambda \left( \frac{1-(1-q)^k}{(1-q)^k} t(H)V_q(H) -  k t(H)t(H^c)+ t(Y)V_q(Y) \right).
\]
\end{lemma}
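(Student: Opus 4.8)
The plan is to build on the greedy characterization of best responses recalled just before the lemma: against a fixed distribution $h$, an optimal search repeatedly opens a box of maximal ratio $h(j)/t_j$, updating $h$ by Bayes' rule (which multiplies the unnormalized weight of a searched box by $1-q$). First I would confirm the structure already motivated in the text. For $h^k$, every $i\in H$ has ratio $\lambda_k/(1-q)^k$ while every $j\in H^c$ has ratio $\lambda_k$; opening a box of $H$ lowers its ratio by a factor $1-q$, so the greedy rule opens each box of $H$ exactly once per round for $k$ rounds before any box of $H^c$ becomes competitive. After these $k$ rounds, every box $j\in Y$ carries unnormalized weight $\lambda_k t_j$, i.e. the residual distribution is exactly $h^*_Y$. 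Hence an optimal response consists of Phase~1 ($k$ rounds through $H$, in any order) followed by Phase~2 (repeatedly cycling through all of $Y$).

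With the structure fixed, I would write the minimum expected search time as $M_k = T_1 + p\,V_q(Y)$, where $T_1$ is the expected time accrued during Phase~1 and $p$ is the probability the Hider survives Phase~1. The Phase~2 term is immediate: conditioned on survival, the residual distribution is $h^*_Y$, against which repeatedly cycling through $Y$ is optimal with value $V_q(Y)$ (as recalled before the lemma), so the expected additional time is $V_q(Y)$; and $p = \lambda_k t(H^c) + \sum_{i\in H} h^k(i)(1-q)^k = \lambda_k t(Y)$. This produces the term $\lambda_k t(Y)V_q(Y)$.

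The substance is computing $T_1 = \sum_{m} t_{b_m} P_m$, the sum over the Phase~1 searches, where $P_m = \sum_j h^k(j)(1-q)^{c_j(m)}$ is the probability the Hider is not yet found before the $m$-th search and $c_j(m)$ counts previous openings of box $j$. I would split $P_m$ into its $H$- and $H^c$-parts. The $H^c$-part stays constant at $\lambda_k t(H^c)$ throughout Phase~1 (those boxes are untouched), and since the total Phase~1 search time is $k\,t(H)$ it contributes $k\,\lambda_k t(H)t(H^c)$. For the $H$-part I would use the identity that cycling through $H$ forever against $h^*_H$ has value $V_q(H)$, i.e. $\sum_{\text{all rounds}} t_{b_m} g(m) = t(H)V_q(H)$ with $g(m) = \sum_{i\in H} t_i (1-q)^{c_i(m)}$; a round-by-round scaling argument shows the $r$-th round contributes $(1-q)^{r-1}$ times the first, so the truncated sum over the first $k$ rounds equals $(1-(1-q)^k)\,t(H)V_q(H)$. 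Multiplying by the factor $\lambda_k/(1-q)^k$ arising from $h^k(i)=\lambda_k t_i/(1-q)^k$ on $H$ gives the $H$-contribution $\lambda_k\frac{1-(1-q)^k}{(1-q)^k}t(H)V_q(H)$. Summing the Phase~2 term and the two Phase~1 parts yields the claimed value.

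The step I expect to be the main obstacle is this geometric round-by-round scaling used to evaluate the truncated Phase~1 sum. One must argue that the per-round contribution scales by exactly $(1-q)^{r-1}$ --- which hinges on all boxes of $H$ carrying the same survival factor at the start of each round --- and that the total over a complete round is independent of the order in which $H$ is traversed, so that identifying the infinite cycle with $V_q(H)$ is legitimate. Once this is in place, the remainder is bookkeeping.
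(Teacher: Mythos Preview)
Your argument is correct, and the $+k\,t(H)t(H^c)$ you obtain is the right sign: the minus in the displayed statement is a typo, as one sees both by carrying the paper's own algebra to its last line and from the fact that the surrounding upper- and lower-bound lemmas both use $+k\,t(H)t(H^c)$.

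The two proofs share the structural step (the greedy rule forces $k$ rounds through $H$ and then repeated cycling of $Y$) but organise the expectation differently. The paper conditions on the \emph{outcome}: it splits according to the round in which detection occurs, writes the Phase-1 contribution as
\[
h^k(H)\Bigl((1-(1-q)^k)V_1(H)+t(H)\sum_{j=1}^{k-1}q(1-q)^{j}j\Bigr),
\]
and then appeals to a closed-form identity for $\sum_{j}jq(1-q)^{j}$. You instead partition the \emph{time axis}, writing $E[T]=\sum_m t_{b_m}P_m$ and splitting each survival probability $P_m$ into its $H$- and $H^c$-components. This buys a cleaner route to the $V_q(H)$ term: the $H$-part of the Phase-1 sum is recognised directly as the first $k$ geometric slices of $t(H)V_q(H)$, so no auxiliary summation identity is needed. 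Your flagged obstacle---order-independence of a single round's contribution---is real but routine: the only order-dependent piece is $\sum_{m\in\text{round}}t_{b_m}\sum_{i\text{ before }m}t_{b_i}$, which is a sum over unordered pairs of boxes in $H$ and hence permutation-invariant; after that, the scaling $g_{\text{round }r}=(1-q)^{r-1}g_{\text{round }1}$ is immediate because every box of $H$ carries the same exponent at the start of each round.
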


\begin{proof}
Let $s$ be a best response to $h^k$, so that $s$ cycles through all the boxes in $H$ in some order $k$ times, before repeatedly cycling through all the boxes in $Y$ in any order. Then
\[
u(s,h^k) = h^k(H) \left( (1-(1-q)^k)V_1(H) + t(H) \sum_{j=1}^{k-1} q(1-q)^j j \right) + (1-h^k(H)(1-(1-q)^k))(kt(H)+V_q(Y)).
\]
Using the definition of $h^k$ and the following identity
\[
\sum_{j=1}^{k-1} q(1-q)^j j \equiv (1-(1-q)^k) \frac{1-q}{q}  - k(1-q)^k,
\]
the expected search time $u(s,h^k) $ can be rewritten
\begin{align*}
u(s,h^k) &= \frac{\lambda_k t(H)}{(1-q)^k} \left( (1-(1-q)^k) \left(V_1(H)  + t(H) \frac{1-q}{q} \right) - k(1-q)^k t(H) \right) \\
& \quad + \lambda_k t(Y) \left( kt(H)+V_q(Y) \right) \\
&= \lambda_k \left( \frac{1-(1-q)^k}{(1-q)^k} t(H)V_q(H) -  k t(H)t(H^c)+ t(Y)V_q(Y) \right),
\end{align*}
using~(\ref{eq:Vq}) and $t(Y)=t(H)+t(H^c)$.
\end{proof}

We will now use the hiding distribution $h^k$ to show that (\ref{eq:C-R}) provides the optimal consistency/robustness tradeoff.

\begin{lemma}\label{lem:tradeoff2}
For any search $s$ and any $k=0,1,\ldots$, we have
\[
\frac{t(H)}{(1-q)^k} C(s) + t(H^c) R(s) \ge t(Y)V_q(Y) + \frac{1-(1-q)^k}{(1-q)^k} t(H)V_q(H) + kt(H)t(H^c).
\]
\end{lemma}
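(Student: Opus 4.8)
The plan is to mirror the lower-bound argument of Lemma~\ref{lem:box-lb}, but with one essential substitution. In the perfect-detection case the argument rested on Lemma~\ref{lem:hiding}, which says every search has the \emph{same} payoff against the single distribution $h^*_Y$; that is false here. Instead I would use the family $h^k$ together with the best-response value computed in Lemma~\ref{lem:hk-lb}. Fix $k$ and let $s$ be an arbitrary search. The starting point is the trivial inequality $u(s,h^k)\ge M_k$, where $M_k$ denotes the minimum expected search time against $h^k$ from Lemma~\ref{lem:hk-lb}; this holds by definition of $M_k$ as a best-response value.

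The crucial structural observation is that $h^k$ is proportional to $t_j$ \emph{within each block}, so conditioning it on the event that the target lies in $H$ recovers exactly $h^*_H$, and conditioning on $H^c$ recovers $h^*_{H^c}$. Concretely $h^k(H)=\lambda_k t(H)/(1-q)^k$ and $h^k(H^c)=\lambda_k t(H^c)$, which lets me write the exact decomposition
\[
u(s,h^k) = \frac{\lambda_k\, t(H)}{(1-q)^k}\, u(s,h^*_H) + \lambda_k\, t(H^c)\, u(s,h^*_{H^c}).
\]
This is the key step: it is precisely the alignment of the two conditional distributions with $h^*_H$ and $h^*_{H^c}$ that makes consistency and robustness the right quantities to bring in next.

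Next I would invoke the definitions of $C$ and $R$. Since $h^*_H$ is a probability distribution supported on $H$, its payoff is an average of the pure payoffs $u(s,y)$ over $y\in H$, hence $u(s,h^*_H)\le \sup_{y\in H}u(s,y)=C(s)$; likewise $u(s,h^*_{H^c})\le \sup_{y\in Y}u(s,y)=R(s)$, because $h^*_{H^c}$ is supported on $H^c\subseteq Y$. Substituting these bounds into the decomposition (the coefficients are positive) gives
\[
\frac{\lambda_k\, t(H)}{(1-q)^k}\, C(s) + \lambda_k\, t(H^c)\, R(s) \;\ge\; u(s,h^k) \;\ge\; M_k.
\]
Dividing through by $\lambda_k>0$ and substituting $M_k/\lambda_k$ from Lemma~\ref{lem:hk-lb} yields exactly the claimed inequality.

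I expect the proof itself to be short; the only delicate points are bookkeeping. One must verify that $\lambda_k$ cancels cleanly after division and that the block-conditional distributions really are $h^*_H$ and $h^*_{H^c}$. The one item I would check most carefully is the sign of the $k\,t(H)t(H^c)$ term in $M_k/\lambda_k$: it should come out as $+\,k\,t(H)t(H^c)$ so that the resulting bound agrees with the $+\,k\,t(H)t(H^c)$ appearing in the tight linear relation of Lemma~\ref{lem:tradeoff}(iii). That agreement with the achievability side is precisely what certifies the bound is tight rather than merely valid, and it serves as a useful sanity check on the computation of $M_k$.
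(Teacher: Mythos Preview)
Your proposal is correct and mirrors the paper's proof essentially step for step: bound $u(s,h^k)$ below by the best-response value from Lemma~\ref{lem:hk-lb}, decompose $u(s,h^k)$ via conditioning on $H$ versus $H^c$ to recover $h^*_H$ and $h^*_{H^c}$, replace those payoffs by $C(s)$ and $R(s)$, and divide through by $\lambda_k$. Your flagged sanity check on the sign of the $k\,t(H)t(H^c)$ term is apt and is exactly the kind of bookkeeping the paper glosses over with ``the lemma follows.''
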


\begin{proof}[Proof of Lemma~\ref{lem:tradeoff2}]
 Let $s$ be an arbitrary mixed search strategy. By Lemma~\ref{lem:hk-lb}, we must have 
\[
u(s, h^k) \ge \lambda_k \left( \frac{1-(1-q)^k}{(1-q)^k} t(H)V_q(H) -  k t(H)t(H^c)+ t(Y)V_q(Y) \right).
\] 
We will now write $u(s, h^k)$ in a different way, using the fact that the distribution $h^*_{H}$ is equal to the distribution $h^k$ conditional on the hider lying in $H$, and similarly for $h^*_{H^c}$. So,
\begin{align*}
u(s, h^k) &= h^k(H) u(s,h^*_H) + h^k(H^c) u(s,h^*_{H^c}) \\
&= \frac{\lambda_k t(H)}{(1-q)^k} u(s,h^*_H)+ \lambda_k t(H^c) u(s,h^*_{H^c}) \\
&\le \frac{\lambda_k t(H)}{(1-q)^k} C(s)+ \lambda_k t(H^c) R(s),
\end{align*}
by the definitions of consistency and robustness. The lemma follows.
\end{proof}

Combining these lemmas, we can characterize the Pareto frontier.
\begin{theorem}
For a given prediction $H$, the Pareto frontier $P$ 
is given by 
\begin{align*}
P &= \bigcup_{k=1}^\infty \biggl\{ (c,r): \frac{t(H)}{(1-q)^k} c+ t(H^c) r = t(Y)V_q(Y) + \frac{1-(1-q)^k}{(1-q)^k} t(H)V_q(H) + kt(H)t(H^c), \\
&\quad v_{k+1} \le c \le v_k \biggl\} \cup \bigl\{ (c,r):  t(H) c + t(H^c) r = t(Y)V_q(Y), v_1 \le c \le V_q(Y) \bigl\} ,
\end{align*}
where $v_k = V_q(H)  + \frac{(1-q)^k}{q} t(H^c)$ for $k=0,1,\ldots$.
\end{theorem}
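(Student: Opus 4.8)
The plan is to assemble the final theorem from the two key ingredients already proved: the achievability of each linear segment via the family $s^k(\beta)$ (Lemma~\ref{lem:tradeoff}), and the matching lower bound for every $k$ (Lemma~\ref{lem:tradeoff2}). The claimed Pareto frontier is a union of countably many line segments, one for each $k=0,1,\ldots$, and the first thing I would establish is that each point listed in $P$ is actually attained by some strategy $s^k(\beta)$. By Lemma~\ref{lem:tradeoff}(iii), as $\beta$ ranges over its admissible interval, the consistency $c=u(s^k(\beta),i)$ sweeps out a closed interval while $(c,r)$ traces the stated line. I would compute the endpoints: using Lemma~\ref{lem:tradeoff}(i), at $\beta=0$ the consistency equals $v_k = V_q(H)+\frac{(1-q)^k}{q}t(H^c)$, and since $s^k(1)=s^{k+1}(0)$ the upper endpoint of the segment for index $k+1$ matches the lower endpoint for index $k$, giving $c=v_{k+1}$. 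This continuity of consistency across the gluing points $s^k(1)=s^{k+1}(0)$ is what makes the union of segments a single connected curve with no gaps, and it justifies the bounds $v_{k+1}\le c\le v_k$ in the statement.

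Next I would verify that each attained point is genuinely Pareto optimal, i.e.\ not dominated, and that nothing outside $P$ can dominate it. For a fixed $k$, Lemma~\ref{lem:tradeoff2} states that \emph{every} search $s$ satisfies $\frac{t(H)}{(1-q)^k}C(s)+t(H^c)R(s)\ge t(Y)V_q(Y)+\frac{1-(1-q)^k}{(1-q)^k}t(H)V_q(H)+kt(H)t(H^c)$, whereas Lemma~\ref{lem:tradeoff}(iii) shows $s^k(\beta)$ meets this with equality. Hence on the line with slope determined by $k$, the strategies $s^k(\beta)$ lie on the boundary of the feasible region from below. The standard Pareto argument then applies: if some strategy $s'$ dominated a point $(c,r)=(C(s^k(\beta)),R(s^k(\beta)))$ with $C(s')\le c$ and $R(s')\le r$ and one inequality strict, then $\frac{t(H)}{(1-q)^k}C(s')+t(H^c)R(s')$ would be strictly smaller than the right-hand side of Lemma~\ref{lem:tradeoff2}, a contradiction (the coefficients $\frac{t(H)}{(1-q)^k}$ and $t(H^c)$ being strictly positive). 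This shows each point of $P$ is Pareto optimal.

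The more delicate direction is the converse: every Pareto-optimal point lies in $P$. Here the subtlety is that a single lower-bound inequality (one value of $k$) only certifies optimality for points on \emph{that} line; a Pareto point could a priori lie in a region where no single $k$-inequality is tight. The resolution is that the family of half-planes indexed by $k$, together with the consistency constraint $c\ge V_q(H)$ and robustness constraint $r\ge V_q(Y)$ noted in Section~\ref{sec:preliminaries}, has as its lower-left boundary exactly the piecewise-linear curve $P$. I would argue that the feasible region $\{(C(s),R(s)):s\}$ is contained in the intersection of all these half-planes (Lemma~\ref{lem:tradeoff2} for all $k$), and that the lower-left boundary of this intersection is precisely $P$, because the successive segments have \emph{strictly decreasing} slopes $-t(H)/((1-q)^k t(H^c))$ in $c$ (note $(1-q)^k\to 0$, so the coefficient $t(H)/(1-q)^k$ of $c$ grows, making the segments steeper as $k$ increases), and they join continuously at the breakpoints $c=v_k$. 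A point strictly below $P$ violates some half-plane; a point strictly above some segment of $P$ but still feasible is Pareto-dominated by the segment point directly below it. Thus the Pareto frontier is exactly $P$.

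The main obstacle I anticipate is the bookkeeping in the converse direction: confirming that the envelope of the countable family of lines from Lemma~\ref{lem:tradeoff2} is the claimed curve, i.e.\ that the breakpoints where consecutive lines intersect are exactly at $c=v_k$ and that the slope ordering makes the lower envelope coincide segment-by-segment with the achieved curve. Concretely, I would check that the line for index $k$ and the line for index $k+1$ cross precisely at the consistency value $v_{k+1}$ (equivalently, that the point $(v_{k+1}, \cdot)$ satisfies both linear equations), which follows by plugging $c=v_{k+1}$ into both equations and using the identity $s^k(1)=s^{k+1}(0)$ together with Lemma~\ref{lem:tradeoff}(i); this guarantees the union of segments is both connected and coincides with the lower envelope, completing the characterization.
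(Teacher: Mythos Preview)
Your proposal is correct and follows the same approach as the paper, which simply states that the theorem follows by ``combining these lemmas'' (Lemma~\ref{lem:tradeoff} for achievability and Lemma~\ref{lem:tradeoff2} for the lower bounds) without further detail. You have in fact supplied more of the envelope argument---verifying that the breakpoints $v_k$ are where consecutive lines meet and that the slopes are strictly monotone in $k$---than the paper spells out, and these details are accurate.
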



\begin{example}
Figure~\ref{fig:tradeoff} illustrates part of the Pareto frontier for $n=2$, $t_1=t_2=1$, $q=1/2$ and $H=\{1\}$. We only include points on the Pareto frontier given by strategies $s^k(\beta)$ for $k \le 7$. For this example, the consistency must lie between $V(H)=2$ and $V(Y)=3.5$. We mark both the lines $c=2$ and $c=3.5$ on the graph with a dotted line. Observe that as the consistency tends to 2, the robustness tends to infinity. 
\label{ex:example2}
\end{example}

\begin{figure}[ht!]
\centering
\includegraphics[scale=0.2]{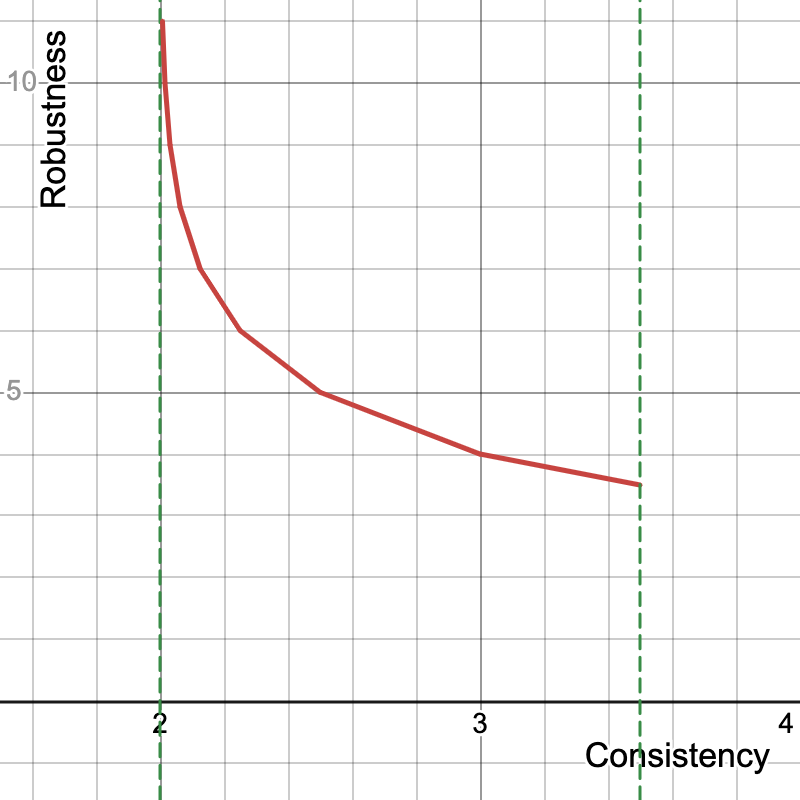}
\caption{An illustration of the Pareto frontier for the search game with imperfect detection.}
\label{fig:tradeoff}
\end{figure}

\section{A General Approach to Characterizing the Pareto Frontier} 
\label{sec:general}

In the search games studied in the previous sections, we made use of explicit characterizations of strategies that have consistency or robustness bounded by some given value. In general, however, such a useful characterization may not be readily available for a given game, as it happens, e.g., in linear search, as we will discuss in Section~\ref{sec:linear.search}. To address this complication, we will now present a general approach to finding the Pareto frontier, which holds not only for search games, but for arbitrary two-player zero-sum games $G$. As usual, $X$ denotes the strategy set of Player 1 (the minimizer) and $Y$ denotes the strategy set of Player 2 (the maximizer). Recall that $\Delta(X)$ denotes the set of mixed strategies of Player 1. We consider the hint $H \subset X$ to be fixed.

We will denote by $A$ the set of consistency-robustness pairs $\{(C(x),R(x)): x \in \Delta(X)\}$, and for $\boldsymbol{\lambda}=(\lambda_1,\lambda_2) \in \mathbb{R}^2_{\ge 0}$, let $P_{\boldsymbol{\lambda}}$ denote the set of pairs $(C(\bar{x}),R(\bar{x})) \in A$ such that $\bar{x}$ minimizes $\lambda_1C(x)+\lambda_2 R(x)$. (We assume that $P_{\boldsymbol{\lambda}}$ is well defined.) We would like to show that the Pareto frontier of $A$ is equal to $\cup_{\boldsymbol{\lambda} \in \mathbb{R}^2_{\ge 0} \setminus \{0\}} P_{\boldsymbol{\lambda}}$, so that it can be found by minimizing all linear combinations $\lambda_1C(x)+\lambda_2 R(x)$. If $A$ were convex, this would follow immediately by applying the well-known concept of {\em scalarization} (for example, see \cite{boyd2004convex}). However, in general, $A$ may not be convex. We prove the following lemma by applying scalarization to the convex hull $\text{conv}(A)$ of $A$.

\begin{lemma} \label{lem:PF}
The Pareto frontier of $A$ is precisely $\cup_{\boldsymbol{\lambda} \in \mathbb{R}^2_{\ge 0} \setminus \{0\}} P_{\boldsymbol{\lambda}}$.
\label{lemma:pareto.characterization}
\end{lemma}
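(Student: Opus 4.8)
The plan is to reduce the statement to the standard scalarization theorem for \emph{convex} sets, the obstacle being that $A$ itself need not be convex. The starting point is that for every fixed pure strategy $y$ the payoff $u(x,y)$ is affine in the mixed strategy $x \in \Delta(X)$; hence $C(x) = \sup_{y \in H} u(x,y)$ and $R(x) = \sup_{y \in Y} u(x,y)$ are both convex functions of $x$, being pointwise suprema of affine functions. Consequently the ``upper-right'' region
\[
A^{\uparrow} := A + \mathbb{R}^2_{\ge 0} = \{(c,r) : c \ge C(x),\ r \ge R(x) \text{ for some } x \in \Delta(X)\}
\]
is convex: it is the image, under the linear projection $(x,c,r) \mapsto (c,r)$, of the convex set $\{(x,c,r) : x \in \Delta(X),\ c \ge C(x),\ r \ge R(x)\}$, which is an intersection of $\Delta(X)$ with the epigraphs of the convex functions $C$ and $R$. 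Note that $A \subseteq \mathrm{conv}(A) \subseteq A^{\uparrow}$.

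The first key step is to show that passing to the convex hull does not change the Pareto frontier, i.e. $\mathrm{Pareto}(A) = \mathrm{Pareto}(\mathrm{conv}(A)) = \mathrm{Pareto}(A^{\uparrow})$. The mechanism is that any point $z \in A^{\uparrow}$ can be written as $z = z_0 + \delta$ with $z_0 \in A$ and $\delta \in \mathbb{R}^2_{\ge 0}$; a short case analysis then shows that if $z$ dominates some $p$, then so does $z_0 \in A$, and conversely that every minimal (Pareto) point of $A^{\uparrow}$ already lies in $A$. Combined with the chain of inclusions above, this forces the three Pareto frontiers to coincide. This is the step that converts a non-convex problem into a convex one, and it is exactly where the convexity of $C$ and $R$ is used.

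With this reduction in hand I would apply scalarization to the convex, upward-closed set $A^{\uparrow}$. For the coverage direction, let $p$ be any Pareto point of $A$; by the previous step it is a minimal, hence boundary, point of the convex set $A^{\uparrow}$, so the supporting hyperplane theorem yields a nonzero $\boldsymbol{\lambda}$ with $\langle \boldsymbol{\lambda}, z \rangle \ge \langle \boldsymbol{\lambda}, p \rangle$ for all $z \in A^{\uparrow}$. Evaluating at $z = p + e_1$ and $z = p + e_2$, which lie in $A^{\uparrow}$ by upward closure, forces $\lambda_1, \lambda_2 \ge 0$, i.e. $\boldsymbol{\lambda} \in \mathbb{R}^2_{\ge 0} \setminus \{0\}$. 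Since $p$ minimizes the linear functional over $A^{\uparrow} \supseteq A$ and $p \in A$, it minimizes $\lambda_1 C(x) + \lambda_2 R(x)$ over $\Delta(X)$, so $p \in P_{\boldsymbol{\lambda}}$. For the soundness direction, take $p = (C(\bar{x}), R(\bar{x})) \in P_{\boldsymbol{\lambda}}$; if some $(c',r') \in A$ dominated $p$, then since $\lambda_1, \lambda_2 \ge 0$ we would have $\lambda_1 c' + \lambda_2 r' \le \lambda_1 C(\bar{x}) + \lambda_2 R(\bar{x})$, so $(c',r')$ would also attain the minimum, giving a second minimizing pair distinct from $p$ and contradicting that $P_{\boldsymbol{\lambda}}$ is well defined. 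Hence $p$ is Pareto optimal.

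The step I expect to be the main obstacle is the boundary behaviour of scalarization when $\boldsymbol{\lambda}$ has a zero coordinate: a minimizer of a single objective can be only \emph{weakly} Pareto optimal, so without care $P_{\boldsymbol{\lambda}}$ could contain dominated points. This is precisely what the well-definedness hypothesis on $P_{\boldsymbol{\lambda}}$ (uniqueness of the minimizing consistency--robustness pair) is there to neutralize, as used in the soundness argument above, while the upward-closedness of $A^{\uparrow}$ is what guarantees the supporting normal is nonnegative in the coverage argument. I would isolate the Pareto-preservation claim of the second paragraph as a self-contained lemma, since it is the only genuinely non-standard ingredient, and then invoke the textbook supporting-hyperplane and scalarization facts for the convex set $A^{\uparrow}$.
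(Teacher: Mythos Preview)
Your coverage argument (every Pareto point lies in some $P_{\boldsymbol\lambda}$) follows essentially the same route as the paper: both exploit the convexity of $C$ and $R$ in $x$ to show that $A$ and a suitable convex superset share the same Pareto frontier, then invoke scalarization. The paper works with $\mathrm{conv}(A)$ and proves directly, via the mixing strategy $x^* = \alpha x_1 + (1-\alpha)x_2$, that every point of $\mathrm{conv}(A)$ is dominated by a point of $A$; you instead pass to the upward closure $A^{\uparrow}$, which is a clean variant because it forces the supporting normal to be nonnegative for free. Either way the mechanism is the same.

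The gap is in your soundness direction. You read ``we assume that $P_{\boldsymbol\lambda}$ is well defined'' as a \emph{uniqueness} hypothesis on the minimizing $(C,R)$-pair, and then derive a contradiction from the existence of a second minimizer. But in the paper $P_{\boldsymbol\lambda}$ is explicitly a \emph{set} of pairs, and ``well defined'' only means the infimum is attained (so the set is nonempty), not that it is a singleton. Without uniqueness your contradiction evaporates at the boundary: for $\boldsymbol\lambda = (0,1)$, every robustness-minimizer lies in $P_{(0,1)}$, yet such a minimizer can be strictly dominated in the $C$-coordinate by another robustness-minimizer, so it is only weakly Pareto optimal. In fact the paper's own proof establishes only the inclusion $\mathrm{Pareto}(A) \subseteq \bigcup_{\boldsymbol\lambda} P_{\boldsymbol\lambda}$, which is the direction actually used downstream via Lemma~\ref{lem:pareto.game}. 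You are right that the word ``precisely'' invites both inclusions, but the repair is not the one you propose: it would require either restricting to strictly positive $\boldsymbol\lambda$ or an explicit uniqueness assumption, neither of which is stated.
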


\begin{proof}
We first show that for every point $(c,r)\in \text{conv}(A)$, there is a point $(C(x^*),R(x^*)) \in A$, for some $x^* \in \Delta(X)$ such that $(C(x^*),R(x^*)) \le (c,r)$. Indeed, such a point $(c,r) \in \text{conv}(A)$ can be written 
as 
\[
(c,r) = \alpha(C(x_1),R(x_1)) + (1-\alpha)(C(x_2), R(x_2)),
\]
for some strategies $x_1,x_2 \in \Delta(X)$ and some $0 \le \alpha \le 1$. Let $x^*$ be the strategy that chooses $x_1$ with probability $\alpha$ and $x_2$ with probability $1-\alpha$. Then
\begin{align*}
C(x^*) &  = \sup_{y \in H} u(x^*,y) \\
&= \sup_{y \in H} (\alpha u(x_1,y) + (1-\alpha) u(x_2,y)) \\
&\le \alpha \sup_{y \in H} u(x_1,y)  + (1-\alpha) \sup_{y \in H} u(x_2,y) \\
&= \alpha C(x_1) + (1-\alpha)C(x_2) \\
&=c.
\end{align*}
Similarly, $R(x^*) \le \alpha R(x_1) + (1-\alpha)R(x_2) = r$. Hence, $(C(x^*),R(x^*)) \le (c,r)$.

We will now show that $A$ has the same Pareto frontier as $\text{conv}(A)$. Let $(c,r)$ be a Pareto optimal point of $\text{conv}(A)$. Then, by our observation above, there is some point $(C(x^*),R(x^*)) \in A \subseteq \text{conv}(A)$ such that $(C(x^*),R(x^*)) \le (c,r)$. By the Pareto optimality of $(c,r)$, we must have $(c,r) = (C(x^*),R(x^*))$, so $(c,r) \in A$ and must be Pareto-optimal in $A$. 

Suppose now that some point $(c,r)$ is not Pareto optimal in $\text{conv}(A)$. Then there exists 
another point $(c',r') \le (c,r)$ such that $(c',r') \neq (c,r)$. By our observation, there is a point $(C(x^*),R(x^*)) \le (c',r')$, so that this point Pareto dominates $(c,r)$, and so $(c,r)$ cannot be Pareto optimal in $A$. 
We conclude that the Pareto frontier of $A$ and $\text{conv}(A)$ are equal. 

To complete the proof, we must show that $\cup_{\boldsymbol{\lambda} \in \mathbb{R}^2_{\ge 0} \setminus \{0\}} P_{\boldsymbol{\lambda}}$ contains all the points on the Pareto frontier of $A$. Suppose that some point $(C(\bar{x}),R(\bar{x}))$ is on the Pareto frontier of $A$. Then it is also on the Pareto frontier of $\text{conv}(A)$. Since $\text{conv}(A)$ is convex, $(C(\bar{x}),R(\bar{x}))$ can be found be applying scalarization to $\text{conv}(A)$, so that $(C(\bar{x}),R(\bar{x}))$ minimizes $\lambda_1 c + \lambda_2 r$  over all pairs $(c,r) \in \text{conv}(A)$, for some $\lambda_1,\lambda_2$. But in that case, $(C(\bar{x}),R(\bar{x}))$ must also minimize $\lambda_1 C(x) + \lambda_2 R(x)$ over $x \in \Delta(X)$, and therefore lies in $P_{(\lambda_1,\lambda_2)}$.
\end{proof}

The above lemma shows that a point on the Pareto frontier can be found be solving a minimization problem. We now go on to show that the solution of this problem coincides with the solution of the following auxiliary game.


\begin{definition}
Let $G$ be a zero-sum game with Player 1 strategies $X$, Player 2 strategies $Y$, and payoff function $u$. Let $H \subset Y$ be a prediction. Given $\boldsymbol{\lambda} \in \mathbb{R}^2_{\ge 0}$, define the zero-sum game $G_{\boldsymbol{\lambda}}$
as follows: Player~1's pure strategy set is $X$, Player~2's pure strategy set is $H \times Y$, and the payoff is given by 
\[
v(x,(y_1,y_2)) = \lambda_1 u(x,y_1)+\lambda_2 u(x,y_2).
\]
\label{def:aux.game}
\end{definition}
\begin{lemma}
If the game $G_{\boldsymbol{\lambda}}$ has a value and optimal (min-max) strategies for Player~1, then the set of optimal strategies for Player~1 in $G_{\boldsymbol{\lambda}}$ is precisely $P_{\boldsymbol{\lambda}}$.
Specifically, it is sufficient to find optimal strategies in the games $G_{\boldsymbol{\lambda}}$ for $\boldsymbol{\lambda}= (0,1)$ and $\boldsymbol{\lambda} = (1,\lambda)$, where $\lambda \ge 0$.
\label{lem:pareto.game}
\end{lemma}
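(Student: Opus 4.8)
The plan is to connect the definition of $P_{\boldsymbol{\lambda}}$ with the value of the auxiliary game $G_{\boldsymbol{\lambda}}$. First I would unpack what it means for Player~1 to minimize the objective $v$ in $G_{\boldsymbol{\lambda}}$ against a worst-case Player~2. Since Player~2's pure strategy set is $H \times Y$ and the payoff is $v(x,(y_1,y_2)) = \lambda_1 u(x,y_1) + \lambda_2 u(x,y_2)$, the key observation is that Player~2 can choose the two coordinates $y_1 \in H$ and $y_2 \in Y$ \emph{independently}. Therefore, for a fixed $x \in \Delta(X)$,
\[
\sup_{(y_1,y_2) \in H \times Y} v(x,(y_1,y_2)) = \lambda_1 \sup_{y_1 \in H} u(x,y_1) + \lambda_2 \sup_{y_2 \in Y} u(x,y_2) = \lambda_1 C(x) + \lambda_2 R(x).
\]
This is the crux: the separable product structure of the strategy set lets the adversary optimize each term of the linear combination separately, so the worst-case value of $v$ against $x$ is exactly the scalarized objective $\lambda_1 C(x) + \lambda_2 R(x)$.

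Given this identity, the rest follows quickly. A Player~1 strategy $\bar{x}$ is optimal (min-max) in $G_{\boldsymbol{\lambda}}$ precisely when it minimizes $\sup_{(y_1,y_2)} v(x,(y_1,y_2))$ over $x \in \Delta(X)$, which by the displayed identity is the same as minimizing $\lambda_1 C(x) + \lambda_2 R(x)$. By the definition of $P_{\boldsymbol{\lambda}}$, the set of such minimizers $\bar{x}$ yields exactly the pairs $(C(\bar{x}),R(\bar{x})) \in P_{\boldsymbol{\lambda}}$, establishing the first claim. I would state the assumption that $G_{\boldsymbol{\lambda}}$ has a value and that Player~1 has optimal strategies, so that the set of min-max strategies is nonempty and well-defined; the identity above shows these min-max strategies are precisely the minimizers of the scalarized objective.

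For the ``specifically'' clause, I would invoke Lemma~\ref{lem:PF}, which shows the Pareto frontier equals $\cup_{\boldsymbol{\lambda} \in \mathbb{R}^2_{\ge 0} \setminus \{0\}} P_{\boldsymbol{\lambda}}$. It therefore suffices to range over a representative family of direction vectors $\boldsymbol{\lambda}$. Since $P_{\boldsymbol{\lambda}}$ is invariant under positive scaling of $\boldsymbol{\lambda}$ (scaling the objective by a positive constant does not change its set of minimizers), every nonzero $\boldsymbol{\lambda} \in \mathbb{R}^2_{\ge 0}$ is equivalent either to $(0,1)$ (when $\lambda_1 = 0$) or, after dividing by $\lambda_1 > 0$, to $(1,\lambda)$ for some $\lambda = \lambda_2/\lambda_1 \ge 0$. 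Hence solving $G_{\boldsymbol{\lambda}}$ for these two parametrized families recovers all of $\cup_{\boldsymbol{\lambda}} P_{\boldsymbol{\lambda}}$, and thus the entire Pareto frontier.

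I expect the main subtlety — rather than a genuine obstacle — to be the careful justification of the interchange in the supremum computation: namely that $\sup$ over a product set of a sum whose terms depend on separate coordinates splits as the sum of the separate suprema. This is elementary but worth stating precisely, since it is exactly the structural feature of Definition~\ref{def:aux.game} that makes the auxiliary game encode the scalarized bi-objective. The remaining steps (scale-invariance of minimizer sets and the reduction to two parameter families) are routine once this identity is in hand.
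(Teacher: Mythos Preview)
Your proposal is correct and follows essentially the same approach as the paper: you use the separability of the sup over $H \times Y$ to identify $\sup_{(y_1,y_2)} v(x,(y_1,y_2))$ with $\lambda_1 C(x) + \lambda_2 R(x)$, deduce that min-max strategies are exactly the scalarization minimizers, and then invoke Lemma~\ref{lem:PF} together with scale-invariance to reduce to $\boldsymbol{\lambda}=(0,1)$ and $\boldsymbol{\lambda}=(1,\lambda)$. If anything, your remark that the minimizers $\bar{x}$ correspond to the pairs $(C(\bar{x}),R(\bar{x}))\in P_{\boldsymbol{\lambda}}$ is slightly more careful about the distinction between strategies and consistency/robustness pairs than the paper's own wording.
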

\begin{proof}
The value $V$ of $G_{\boldsymbol{\lambda}}$ is given by
\begin{align*}
V &= \min_{x \in \Delta(X)} \max_{(y_1,y_2) \in H \times Y} v(x,(y_1,y_2)) \\
& =  \min_{x \in \Delta(X)} \max_{(y_1,y_2) \in H \times Y}  \lambda_1 u(x,y_1)+\lambda_2 u(x,y_2) \\
& =  \min_{x \in \Delta(X)} \left( \max_{y_1 \in H} \lambda_1 u(x,y_1)+\lambda_2 \max_{y_2 \in Y} u(x,y_2) \right)\\
& = \min_{x \in \Delta(X)} (\lambda_1C(x) + \lambda_2 R(x)).
\end{align*}
Hence, a strategy $x$ is optimal if and only if it lies in $\arg \min \{\lambda_1 C(x)+\lambda_2 R(x): x \in \Delta(X)\} \equiv P_{\mathbf{\lambda}}$.

By Lemma~\ref{lem:PF}, the Pareto frontier is equal to the set of values of games $G_{\boldsymbol{\lambda}}$ for $\boldsymbol{\lambda} \in \mathbb{R}^2_{\ge 0}$.  By rescaling, it is clearly sufficient to consider only the cases $\boldsymbol{\lambda}=(0,1)$ or $\boldsymbol{\lambda}=(1,\lambda_2)$, where $\lambda_2 \ge 0$. Note that for $\boldsymbol{\lambda}=(0,1)$, the game $G_\lambda$ reduces to the standard game $G$.
\end{proof}




\section{Searching in the Infinite Line}
\label{sec:linear.search}

In this section, we focus on the linear search problem, which is one of the fundamental search games in unbounded spaces; see Chapters 8 and 9 in~\cite{searchgames} for many variants of this classic game. Another important aspect of this problem is that, unlike box search, it is not obvious how to characterize strategies of bounded consistency/robustness. We will thus use it so as to illustrate our approach of Section~\ref{sec:general}.

In the standard version of the game (with no predictions) the search space consists of an infinite line with an origin $O$, and the Searcher (initially placed at $O$) must locate an immobile Hider that lies at some unknown point of the line. 
A pure strategy for the Searcher can be defined as a biinfinite sequence\footnote{This definition allows for infinitesimal oscillations around the origin, which is one way to guarantee that the game has bounded value. Another way is to require that the Hider is at least a unit distance away from $O$. We adopt the former, which is common in the study of search games and leads to simpler and cleaner expressions; see also the discussion in~\cite{demaine:turn}.} of the form $(x_i,\tau_i)_{i=-\infty}^{+\infty}$, where $x_i$ is the length to which the line is searched
in each iteration, and $\tau_i \in \{0,1\}$ is the direction of the search (where $0/1$ signifies left and right directions respectively). Namely, in iteration $i$, the Searcher starts from $O$, explores the half line $\tau_i$ to distance $x_i$ and returns to $O$. Without loss of generality, we will assume that $\tau_i$ and $\tau_{i+1}$ are of opposite parities (i.e., the Searcher does not visit the same halfline in consecutive iterations) and $x_{i+2}>x_i$ (i.e., the Searcher always explores a new part of the line, in each iteration). A mixed strategy for the Searcher is, as usual, a randomized choice of pure strategies. One way to specify a mixed strategy for the Searcher is to choose a sequence $(x_i,\tau_i)_{i=-\infty}^{+\infty}$, where the $x_i$ are random variables such that $x_{i+2}>x_i$. The set $Y$ of pure strategies for the Hider is $\mathbb{R}\setminus \{0\}$, so that a mixed strategy is a probability distribution on $\mathbb{R}\setminus \{0\}$. Here, we use the convention that positive numbers refer to the right halfline, whereas negative numbers refer to the left halfline. 

Given a strategy $y \in Y$ for the Hider, and a pure strategy $S$ for the Searcher, we define the  payoff $u(S,y)=\frac{c(S,y)}{|y|}$, where $c(S,y)$ denotes the {\em cost} at which a Searcher that follows $S$ locates $y$, and $|y|$ is the distance of the Hider from $O$. In other words, if the search $S$, given by the sequence $(x_i,\tau_i)_{i=-\infty}^{+\infty}$ first reaches the point $y$ during the $j$th iteration, then 
\[
c(S,y) = y+2\sum_{i=-\infty}^{j-1} x_i.
\]
Finding the value of the game amounts to minimizing, over mixed strategies $s$, the worst-case expected {\em normalized} cost
\begin{equation}
\comp(s)= \sup_y \frac{\mathbb{E}[c(s,y)]}{|y|},
\end{equation}
where the expectation is taken with respect to $s$. As first observed in~\cite{beck:yet.more}, this normalization is essential for the game to have a value. Following the TCS terminology, we refer to the 
worst-case expected normalized cost of a search strategy as its {\em competitive ratio}; hence finding the value of the game is equivalent to finding the best (minimum) competitive ratio among all possible strategies. 
In~\cite{searchgames}  as well as~\cite{ray:2randomized}, it was shown that the best randomized competitive ratio equals 
\begin{equation}
\rho^*= 1+\frac{\alpha^*}{\ln \alpha^*},
\label{eq:randomized.cr}
\end{equation} 
where $\alpha^*$ is the minimizer of the function $\frac{1+\alpha}{\ln \alpha}$, with $\alpha>1$. Numerically, $\alpha^* \approx 3.59$ hence the optimal competitive ratio is approximately 4.59.

In the remainder of this section, we study the linear search game in a setting in which the Searcher is enhanced with a prediction $H \subset \mathbb{R}$ about the location of the Hider. Specifically, we will assume that $H$ can be either the set of positive reals, or the set of negative reals, hence $H$ defines a {\em directional} prediction. By symmetry,  we will assume, without loss of generality, that $H$ is the set of positive numbers, or equivalently, that the Hider is at the right of $O$. 
Following our definitions from Section~\ref{sec:preliminaries}, the robustness of a search strategy $s$ is the competitive ratio of $s$, whereas the consistency of $s$ is the maximum expected normalized search cost of $s$, where the maximum is taken over only hiding strategies $y>0$. 

We begin with an upper bound obtained by an explicit search strategy. 

\begin{definition} \label{def:upper.linear}
Given $\alpha>1$ and $\mu \in [0,1]$, define $s_{\alpha, \mu}=(x_i,\tau_i)_{i=-\infty}^{\infty}$ as the mixed strategy whose search lengths are such that $\tau_0=1$ (i.e., even indexed iterations search the predicted halfline); moreover, let $u$ be chosen uniformly at random in the interval $[0,2]$, then the search lengths of $s$ are defined as $x_i=\alpha^{i+u}$, if $i$ is even, and $x_i=\mu \alpha^{i+u}$, if $i$ is odd.
\label{def:s}
\end{definition}

Informally, $s_{\alpha, \mu}$ is a ``biased'' geometric strategy with base $\alpha$ and a random offset $\alpha^u$, and which invests less cost in searching the non-predicted halfline, by a factor $\mu$. Note that for $\mu=1$, we recover the strategy of optimal competitive ratio.  We can easily analyse this family of  strategies:

\begin{proposition}
For any $\alpha>1$ and $\mu\in (0,1]$, $s_{\alpha, \mu}$ has consistency at most $1+\frac{1+\mu\alpha}{\ln \alpha}$ and robustness at most $1+\frac{1+\frac{\alpha}{\mu}}{\ln \alpha}$.
\label{prop:line.upper}
\end{proposition}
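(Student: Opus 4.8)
The plan is to compute the consistency and robustness of $s_{\alpha,\mu}$ directly, by analyzing the expected normalized cost against a Hider placed at an arbitrary point $y$, separately for $y>0$ (the predicted halfline, governing consistency) and for $y<0$ (governing the additional worst case needed for robustness). The key structural fact is that the random offset $u$, uniform on $[0,2]$, makes the geometric search scale-invariant: shifting $y$ by a factor $\alpha$ simply re-indexes which iteration first reaches $y$, so it suffices to analyze a single ``period'' of the geometry and the supremum over $y$ reduces to a supremum over the phase of $y$ within one period.

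First I would fix $y>0$ and identify the first even-indexed iteration whose search length $x_i = \alpha^{i+u}$ reaches $|y|$; writing $|y|$ between two consecutive same-side reach distances and integrating the cost $c(s_{\alpha,\mu},y) = y + 2\sum_{i<j} x_i$ over the uniform offset $u$, the turn-cost sum splits into the geometric contributions of the even iterations (the predicted side, with base $\alpha$ and ratio $\alpha^2$ between consecutive same-side lengths) and the odd iterations (the non-predicted side, scaled by $\mu$). The standard computation for geometric search with a uniformized offset converts these sums into the factor $\frac{1}{\ln\alpha}$ times the total per-period turn cost, and the key point is that scale-invariance makes $\mathbb{E}[c(s_{\alpha,\mu},y)]/|y|$ independent of $|y|$. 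Collecting terms gives consistency $1 + \frac{1+\mu\alpha}{\ln\alpha}$: the ``$1$'' is from the direct term $y/|y|$, the ``$1$'' inside the fraction is the contribution of the searches on the predicted side preceding the successful one, and the ``$\mu\alpha$'' is the contribution of the intervening searches on the non-predicted side, which are smaller by the factor $\mu$.

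For robustness I would repeat the analysis with $y<0$, now identifying the first odd-indexed iteration that reaches $|y|$. The asymmetry flips: the intervening searches on the predicted side are now the ``wasted'' overhead relative to the non-predicted reach distance $\mu\alpha^{i+u}$, so the same computation yields an overhead factor of $\alpha/\mu$ in place of $\mu\alpha$, giving normalized cost $1 + \frac{1+\frac{\alpha}{\mu}}{\ln\alpha}$. Since the robustness is the supremum over all $y \in \mathbb{R}\setminus\{0\}$ and the consistency bound is clearly no larger than the $y<0$ bound when $\mu \le 1$, the robustness equals this second quantity. Throughout I would use the assumed parity and monotonicity conventions ($\tau_i,\tau_{i+1}$ opposite, $x_{i+2}>x_i$) so that ``first reaches $y$'' is well-defined and the successful iteration is unique.

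The main obstacle I anticipate is handling the uniformization cleanly: one must be careful that the supremum over $y$ of the \emph{expected} normalized cost is genuinely attained as a constant independent of $|y|$, which relies on the offset being uniform over an interval of length exactly $2$ (one full period in $\log_\alpha$ scale, covering both halflines). A subtle point is the boundary phase where $|y|$ coincides with a reach distance for some value of $u$; here the identity of the first successful iteration changes as $u$ crosses a threshold, so the integral over $u \in [0,2]$ must be split at that threshold. I expect that after this split the two pieces reassemble into exactly the geometric sums above, so the phase dependence cancels and the stated bounds hold uniformly in $y$; verifying this cancellation is the technical heart of the argument, but it is a routine (if careful) integration rather than a conceptual difficulty.
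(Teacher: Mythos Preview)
Your proposal is correct and follows essentially the same approach as the paper: both condition on which iteration first reaches the target, sum the preceding search lengths as a geometric series in $\alpha$ (with the odd-indexed terms scaled by $\mu$ or $1/\mu$ depending on the side), and then integrate over the uniform offset $u\in[0,2]$ to obtain the $1/\ln\alpha$ factor. The paper's presentation is somewhat terser and phrases the averaging as a conditional expectation $E[x_i \mid x_i < |H| \le x_{i+2}]$ rather than your scale-invariance/phase language, but the computation and the handling of the boundary are the same.
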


\begin{proof}[Proof of Proposition~\ref{prop:line.upper}]
We say that {\em iteration $j$ is on $P$} if during that iteration, the Searcher searches the half line $P$. The cost at which $s_{a,\mu}$ locates a target that hides at the predicted halfline, say on iteration $i+2$, is at most
\begin{align*}
c(s_{a,\mu},H) &=|H|+2E[\sum_{j=-\infty}^{i+1}x_j \ | \ x_i <|H| \leq x_{i+2}, \ i \ \text{on $P$}] \\
&=|H| +2E[(1+a\mu)\frac{1}{1-\frac{1}{a^2}}x_i  | \ x_i <|H| \leq a^2 x_i] \\ 
&=|H|+2(1+a\mu)\frac{a^2}{a^2-1} E[x_i |\ \ \frac{|H}{a^2} <x_i \leq |H| \\
&\leq |H|+2(1+a\mu)\frac{a^2}{a^2-1} \int_0^2\frac{1}{2}|H| a^{u-2} du \\
&= |H|(1+\frac{1+a\mu}{1+\ln a}),
\end{align*}
which establishes the consistency bound. 

Similarly, the cost at which $s_{a,\mu}$ locates a target that hides at the predicted halfline, say on iteration $i+2$, is at most
\begin{align*}
c(s_\alpha,H) &=|H|+2E[\sum_{j=-\infty}^{i+1}x_j \ | \ x_i <|H| \leq x_{i+2}, \ i \ \text{not on $P$}] \\
&=|H| +2E[\frac{1}{1-\frac{1}{a^2}}x_i  | \ x_i <|H| \leq a^2 x_i] \\ 
&=|H|+2(1+\frac{a}{\mu})\frac{a^2}{a^2-1} E[x_i |\ \ \frac{|H}{a^2} <x_i \leq |H| \\
&\leq |H|+2(1+\frac{a}{\mu})\frac{a^2}{a^2-1} \int_0^2\frac{1}{2}|H| a^{u-2} du \\
&= |H|(1+\frac{1+\frac{a}{\mu}}{1+\ln a}),
\end{align*}
which establishes the robustness bound. 
\end{proof}


We will now show that the strategies $s_{\alpha, \mu}$ of Definition~\ref{def:upper.linear} can characterize the Pareto frontier of the game. Specifically, by Lemma~\ref{lem:pareto.game} , it is sufficient to show that for each $ \lambda \in [0,1]$, the Searcher has an optimal strategy of the form $s_{\alpha, \mu}$, for some $\mu$ that is a function of $\lambda$. Recall that the payoff of the game $G_{(1,\lambda)}$ for any $s \in X$ and $(y_1,y_2) \in H \times Y$ is
\[
v(s,(y_1,y_2)) = u(s,y_1)+\lambda u(s,y_2).
\]
It is easy to see, using Proposition~\ref{prop:line.upper}, that for fixed $\alpha$, the best upper bound on the value of $G_{(1,\lambda)}$ obtained by this family of strategies is by taking $\mu=\sqrt{\lambda}$. Hence the following corollary:
\begin{corollary}
For fixed $\lambda \in [0,1]$, the strategy $s_{\alpha, \sqrt{\lambda}}$ ensures a payoff 
\begin{align}
v(s_{\alpha, \sqrt{\lambda}}, (y_1,y_2)) \le 1+\lambda +  \frac{1+\lambda+ 2 \sqrt \lambda \alpha}{\ln \alpha} \label{eq1}
\end{align}
in $G_{(1,\lambda)}$, for any $y_1 \ge 0$ and $y_2 \neq 0$.
\label{cor:linear.upper}
\end{corollary}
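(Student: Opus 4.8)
The goal is to derive the bound in Corollary~\ref{cor:linear.upper} directly from the two bounds established in Proposition~\ref{prop:line.upper}, together with the choice $\mu=\sqrt{\lambda}$. The corollary concerns the payoff $v(s,(y_1,y_2)) = u(s,y_1)+\lambda u(s,y_2)$ of the auxiliary game $G_{(1,\lambda)}$, so the plan is simply to bound each of the two summands separately using the worst-case estimates from the proposition.

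First I would recall that, by the definitions in Section~\ref{sec:linear.search}, the consistency of a strategy $s$ is exactly $\sup_{y_1 \ge 0} u(s,y_1)$ and the robustness is $\sup_{y_2 \neq 0} u(s,y_2)$. Hence for \emph{any} $y_1 \ge 0$ and $y_2 \neq 0$ we have the pointwise bounds $u(s,y_1) \le C(s)$ and $u(s,y_2) \le R(s)$. Applying this to $s=s_{\alpha,\mu}$ and invoking Proposition~\ref{prop:line.upper} gives
\[
v(s_{\alpha,\mu},(y_1,y_2)) \le C(s_{\alpha,\mu}) + \lambda R(s_{\alpha,\mu}) \le \left(1+\frac{1+\mu\alpha}{\ln\alpha}\right) + \lambda\left(1+\frac{1+\frac{\alpha}{\mu}}{\ln\alpha}\right).
\]

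Next I would substitute $\mu=\sqrt{\lambda}$. Collecting the constant terms gives $1+\lambda$, and collecting the two fractions over the common denominator $\ln\alpha$ gives a numerator of $(1+\sqrt{\lambda}\,\alpha) + \lambda(1+\frac{\alpha}{\sqrt{\lambda}}) = 1+\lambda + \sqrt{\lambda}\,\alpha + \sqrt{\lambda}\,\alpha = 1+\lambda+2\sqrt{\lambda}\,\alpha$, which is exactly the numerator appearing in~(\ref{eq1}). This is the only genuine computation, and it is elementary. I would also remark that the choice $\mu=\sqrt{\lambda}$ is precisely what minimizes $\mu\alpha + \lambda\frac{\alpha}{\mu}$ by the AM--GM inequality (the two cross terms $\mu\alpha$ and $\lambda\alpha/\mu$ are balanced when $\mu^2=\lambda$), which justifies the claim in the preceding text that this is the best upper bound obtainable from this family for fixed $\alpha$.

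There is no real obstacle here: the corollary is an immediate consequence of Proposition~\ref{prop:line.upper}, and the only subtlety is the (standard) observation that the supremum defining the payoff over $(y_1,y_2)$ decouples into the separate suprema defining consistency and robustness, because $y_1$ and $y_2$ range independently over $H$ and $Y$. The mild caveat worth noting is that the proposition requires $\mu \in (0,1]$, so the identity $\mu=\sqrt{\lambda}$ is valid precisely because $\lambda \in [0,1]$; the boundary case $\lambda=0$ corresponds to $\mu=0$ and reduces to considering consistency alone, consistent with the endpoint $\boldsymbol{\lambda}=(1,0)$.
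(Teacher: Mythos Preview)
Your proof is correct and is essentially the approach the paper takes: the paper does not give a separate proof of the corollary but simply states that it follows from Proposition~\ref{prop:line.upper} by taking $\mu=\sqrt{\lambda}$, which is exactly what you have spelled out. Your additional remark that $\mu=\sqrt{\lambda}$ minimizes $\mu\alpha+\lambda\alpha/\mu$ via AM--GM is a clean justification of the claim preceding the corollary.
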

It is easy to see that for any fixed $\lambda$, the RHS of~\eqref{eq1} has a unique minimum attained for some $\alpha=\bar{\alpha}>1$, hence
\begin{equation}
v(s_{\alpha, \sqrt{\lambda}}, (y_1,y_2)) \le 1+\lambda +  \frac{1+\lambda+ 2 \sqrt \lambda\bar{\alpha}}{\ln \bar{\alpha}}.
\label{eq:eq2}
\end{equation}

In the next step, we will derive a lower bound that matches~\eqref{eq:eq2} using Lemma~\ref{lem:pareto.game}. Namely, we will present a 
family of mixed Hider strategies that come arbitrarily close to solving the problem 
\[
\max_{(h_1,h_2)\in \Delta(H \times Y)} \min_{S \in X} (u(s,h_1) + \lambda u(s,h_2)).
\]
To this end, let $S$ denote a pure (deterministic) strategy. 
We will prove the following lower bound (Theorem~\ref{thm:linear.lower}), by extending an approach for the standard game due to Gal~\cite{Gal80}. The intuition is to define two {\em asymmetric} hider distributions so as to reflect that the prediction $H$ points to the right half-line. For the given $\lambda$, the distribution spreads the hider to a shorter span in the left half-line than the right one. Specifically, these spans are multiplied and divided by $\sqrt{\lambda}$, respectively, as will become clear in the proof. 

\begin{theorem}
For any fixed $\varepsilon>0$, and any $\lambda \in (0,1)$, there exist hiding distributions $h_1 \in X$
and $h_2 \in Y$ such that for any pure strategy $S$ it holds that
\[
 u(S,h_1) + \lambda u(S,h_2)   \ge  (1-\varepsilon) \left( 1 +\lambda + \frac{1+\lambda + 2\sqrt \lambda \bar{\alpha}}{ \ln \bar{\alpha}} \right).
\]
\label{thm:linear.lower}
\end{theorem}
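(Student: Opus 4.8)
The plan is to prove the lower bound by exhibiting explicit hider distributions and bounding $u(S,h_1)+\lambda u(S,h_2)$ from below for \emph{every} pure Searcher strategy $S$; since a mixed Searcher strategy is an average of pure ones, this bounds the value of $G_{(1,\lambda)}$ from below and matches the upper bound~\eqref{eq:eq2}. Following Gal, the distributions are taken to be \emph{log-uniform}: the hider's distance has density proportional to $1/|y|$, so that $\ln|y|$ is uniform over a large range. Because the prediction points right, $h_1$ is supported on the positive (predicted) halfline and the robustness distribution $h_2$ is supported on the negative halfline. The two log-supports are geometrically offset by a factor $\sqrt{\lambda}$ (the right support stretched by $1/\sqrt{\lambda}$, the left compressed by $\sqrt{\lambda}$), which is exactly the ``shorter left span'' of the intuition; this offset mirrors the optimal bias $\mu=\sqrt{\lambda}$ of the matching strategy $s_{\alpha,\sqrt{\lambda}}$ and probes the Searcher's left-halfline depths at the scales where an optimally biased Searcher reaches them.

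First I would fix a pure $S=(x_i,\tau_i)$ and, using the paper's standing conventions, assume the search lengths are increasing on each halfline ($x_{i+2}>x_i$) and directions alternate. For a hider at distance $r$ on the right, located at the first right-iteration that overshoots $r$, the normalized cost equals $1+2(\text{total length travelled before that iteration})/r$, and symmetrically on the left; the crucial feature is that this ``total length travelled'' accumulates searches on \emph{both} halflines, so the two sides are coupled. Substituting $r=e^{s}$ turns $u(S,h_1)$ and $u(S,h_2)$ into integrals of the normalized cost against the uniform measure in $s$, each decomposing as a sum of contributions, one per interval $(x_i,x_{i+2})$ between consecutive record depths on a given side. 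I would thus write $u(S,h_1)+\lambda u(S,h_2)$ as a single additive functional of the doubly-infinite sequence of record depths, carrying weight $1$ on right-intervals and weight $\lambda$ on left-intervals.

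The key step is to minimize this functional over all admissible depth sequences, for which I would invoke a weighted, asymmetric extension of Gal's theorem: the minimizing sequences are geometric with a common base and a fixed ratio (``bias'') between the left and right depths. Reducing to such a sequence with base $\alpha$ and bias $\mu$, a per-period computation gives $u(S,h_1)=1+\frac{1+\mu\alpha}{\ln\alpha}$ and $u(S,h_2)=1+\frac{1+\alpha/\mu}{\ln\alpha}$, so the weighted payoff equals $1+\lambda+\frac{(1+\lambda)+\alpha(\mu+\lambda/\mu)}{\ln\alpha}$. Minimizing over the bias by AM--GM gives $\mu+\lambda/\mu\ge 2\sqrt{\lambda}$ (attained at $\mu=\sqrt{\lambda}$), and minimizing over the base recovers $\bar\alpha$ from~\eqref{eq:eq2}, yielding exactly $1+\lambda+\frac{1+\lambda+2\sqrt{\lambda}\bar\alpha}{\ln\bar\alpha}$. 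Since genuine probability distributions require a finite range, the record depths near the two ends of the support contribute boundary terms; letting the range tend to infinity makes these negligible and is precisely what produces the factor $(1-\varepsilon)$, so the bound is only approached arbitrarily closely rather than attained.

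The main obstacle is the minimization over arbitrary (non-geometric, arbitrarily biased) depth sequences in the presence of the coupling between the two halflines: the cost of locating a hider on one side includes travel spent on the \emph{other} side, so the right- and left-contributions cannot be optimized independently, and the single-sided form of Gal's theorem does not apply directly. Making the weighted extension rigorous --- proving that geometric-with-bias sequences minimize the coupled functional, and checking that the $\sqrt{\lambda}$ offset aligns the two sides so that the AM--GM bias penalty $2\sqrt{\lambda}$ emerges --- is the technical heart, together with a uniform estimate of the boundary terms needed to extract the clean $(1-\varepsilon)$ factor.
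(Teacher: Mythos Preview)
Your setup is right and matches the paper exactly: the hiding distributions are log-uniform (density $\varepsilon/|y|$ on a finite interval, with an atom at the far endpoint so that the useful identity $\int_y^{\alpha R}\frac{1}{x}\,dh_\alpha(x)=\varepsilon/y$ holds), with $h_1$ supported on $[1/\sqrt{\lambda},\,R/\sqrt{\lambda}]$ and $h_2$ on $[-\sqrt{\lambda}R,\,-\sqrt{\lambda}]$. The $\sqrt{\lambda}$ offset you describe is precisely the paper's choice.

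The gap is in your Step~3. You propose to first establish a ``weighted, asymmetric extension of Gal's theorem'' showing that geometric-with-bias sequences minimise the coupled functional, and only then optimise over the bias $\mu$ via $\mu+\lambda/\mu\ge 2\sqrt{\lambda}$. You correctly flag this structural reduction as the main obstacle --- and indeed it is not a standard result, so as stated your argument is incomplete. The paper sidesteps this entirely: it never characterises the optimal sequence shape. Instead, for an \emph{arbitrary} pure strategy with turn depths $x_0,x_1,\ldots,x_{n-1}$ (and $x_{-1}=\sqrt{\lambda}$, $x_{-2}=1/\sqrt{\lambda}$), the direct computation against the two distributions yields
\[
u(S,h_1)+\lambda\,u(S,h_2)
= (1+\lambda)(1+\varepsilon n) + 2\varepsilon\!\!\sum_{i\ \text{odd}}\frac{x_i}{x_{i-1}}
+ 2\varepsilon\lambda\!\!\sum_{i\ \text{even}}\frac{x_i}{x_{i-1}} \;-\;O(\varepsilon),
\]
and a \emph{single} application of AM--GM to all $n$ terms on the right gives
\[
\sum_{i\ \text{odd}}\frac{x_i}{x_{i-1}}+\lambda\sum_{i\ \text{even}}\frac{x_i}{x_{i-1}}
\;\ge\; n\left(\lambda^{n/2}\prod_{i=0}^{n-1}\frac{x_i}{x_{i-1}}\right)^{1/n}
= n\,\sqrt{\lambda}\,R^{1/n},
\]
since the product telescopes to $x_{n-1}/x_{-1}=R$. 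Substituting $\varepsilon n=(1-\varepsilon)/\ln R^{1/n}$ and minimising over $n$ (equivalently over $R^{1/n}$) recovers $\bar\alpha$ directly. Thus the $2\sqrt{\lambda}$ factor emerges from the $\lambda^{n/2}$ weight inside the geometric mean, not from a separate optimisation over a bias parameter; no reduction to geometric sequences is needed, and the ``coupling between halflines'' that worried you is handled automatically by the telescoping. Your two-stage plan (structure theorem, then bias AM--GM) would ultimately prove the same inequality, but the paper's one-shot AM--GM on the ratios is both the shortcut and, in effect, the proof of the structural claim you wanted to invoke.
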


\begin{proof}[Proof sketch]
Let $\varepsilon >0$ and let $R=\exp((1-\varepsilon)/\varepsilon)$. (Assume that $\varepsilon$ is sufficiently small so that $R >1$.) For $\alpha > 0$, let $h_{\alpha}$ be the hiding distribution defined on the interval $[\alpha, \alpha R]$ that has p.d.f. $\varepsilon/x$ for $\alpha \le x \le \alpha R$ and an atom of mass $\varepsilon$ at $x=\alpha R$.  It is easy to verify that this is a well defined probability distribution.  Also write $h_{-\alpha}$ for the hiding strategy on the interval $[-\alpha R, -\alpha]$ that has p.d.f. $-\varepsilon/x$ for $-\alpha R \le x \le -\alpha$ and an atom of mass $\varepsilon$ at $x=-\alpha R$

Note that for $\alpha >0$ and $\alpha \le y \le \alpha R$,
\begin{align}
\int_{y}^{\alpha R} \frac{1}{x} ~ dh_{\alpha}(x) = \int_{y}^{\alpha R} \frac{1}{x} \cdot \frac{\varepsilon}{x} ~ dx + \frac{\varepsilon}{\alpha R}
= \frac{\varepsilon}{y}. \label{eq:hiding-int}
\end{align}

Let $S$ be an arbitrary strategy in which the Searcher first goes to $x_0 \ge 1/{\sqrt \lambda}$, then to $-x_1 \le -\sqrt \lambda$, then to $x_2 \ge x_0$, then to $-x_3 \le -x_1$, and so on until reaching $-x_{n-1}=-\sqrt \lambda R$ and $x_n = R/{\sqrt \lambda}$, for some $n$. Note that, for a strategy that is defined in this way, $n$ must be even. We will first prove the result under the assumption that 
the strategy is of this form, then in the Appendix we extend the proof to account for the remaining cases. For convenience, we set $x_{-1}=\sqrt{\lambda}$ and $x_{-2}=1/\sqrt{\lambda}$.

Note that for $0 \le i \le n$ and $x_{i-2} < x \le x_i$, we have
\begin{align}
u(S,x) = \frac{x + 2\sum_{j=0}^{i-1} x_j}{x} = 1 + \frac{2}{x} \sum_{j=0}^{i-1} x_j. \label{eq:cr}
\end{align}

Using~(\ref{eq:cr}), we obtain an expression for $u(s, h_{1/\sqrt{\lambda}})$ as follows.

\begin{align*}
u(S, h_{1/\sqrt{\lambda}}) &= \sum_{i \text{ even, } 0 \le i \le n} \int_{x_{i-2}}^{x_i} \left( 1 + \frac{2}{x} \sum_{j=0}^{i-1} x_j\right) ~ dh_{1/\sqrt{\lambda}}  \\
&=1 + 2 \cdot \sum_{i \text{ even, } 0 \le i \le n-2} x_i \int_{x_{i}}^{x_n} \frac{1}{x} ~ dh_{1/\sqrt{\lambda}}(x)  + 2 \cdot \sum_{i \text{ odd, } 1 \le i \le n-1} x_i  \int_{x_{i-1}}^{x_n} \frac{1}{x} ~ dh_{1/\sqrt{\lambda}}(x), 
\end{align*}
by swapping the order of integration and summation. Using (\ref{eq:hiding-int}) and the definition of $x_n$, this simplifies to
\begin{align}
u(S, h_{1/\sqrt{\lambda}}) & = 1 + 2 \cdot \sum_{i \text{ even, } 0 \le i \le n-2}   \frac{\varepsilon x_i}{x_{i}} + 2 \cdot \sum_{i \text{ odd, } 1 \le i \le n-1} \frac{ \varepsilon x_i}{x_{i-1}} \nonumber \\
& = 1 + \varepsilon n + 2\varepsilon \cdot \sum_{i \text{ odd, } 1 \le i \le n-1} \frac{ x_i}{x_{i-1}} \label{eq:+ve}.
\end{align}

Similarly (and using the symmetry of $h_{-\sqrt{\lambda}}$ and $h_{\sqrt{\lambda}}$),

\begin{align}
u(S, h_{-\sqrt{\lambda}}) &= \sum_{i \text{ odd, } 1 \le i \le n-1} \int_{x_{i-2}}^{x_i} \left( 1 + \frac{2}{x} \sum_{j=0}^{i-1} x_j\right) ~ dh_{\sqrt{\lambda}} \nonumber \\
&=1 + 2 \cdot \sum_{i \text{ even, } 0 \le i \le n-2} x_i \int_{x_{i-1}}^{x_{n-1}} \frac{1}{x} ~ dh_{\sqrt{\lambda}}(x)  + 2 \cdot \sum_{i \text{ odd, } 1 \le i \le n-1} x_i  \int_{x_{i}}^{x_{n-1}} \frac{1}{x} ~ dh_{\sqrt{\lambda}}(x) \nonumber \\
&= 1 + \varepsilon n + 2\varepsilon \cdot \sum_{i \text{ even, } 0 \le i \le n-2} \frac{ x_i}{x_{i-1}} \label{eq:-ve}.
\end{align}

Combining Equations~(\ref{eq:+ve}) and~(\ref{eq:-ve}), 
\begin{align*}
u(s,h_{1/\sqrt{\lambda}}) + \lambda u(S,h_{-\sqrt{\lambda}}) &= 1+\lambda + \varepsilon n (1+\lambda) + 2\varepsilon \cdot \sum_{i \text{ even, } 0 \le i \le n-2} \frac{\lambda x_i}{x_{i-1}} +  2\varepsilon \cdot \sum_{i \text{ odd, } 1 \le i \le n-1} \frac{x_i}{x_{i-1}} \\
& \ge 1+\lambda + \varepsilon n(1+\lambda) + 2 \varepsilon n \left( \lambda^{n/2} \prod_{i=0}^{n-1} \frac{x_i}{x_{i-1}} \right)^{1/n},
\end{align*}
using the inequality of arithmetic and geometric means. The product is telescopic, thus
\begin{align*}
 u(S,h_{1/\sqrt{\lambda}}) + \lambda u(S,h_{-\sqrt{\lambda}})  & = 1 + \lambda + \varepsilon n (1+\lambda) + 2 \varepsilon n \sqrt{\lambda} R^{1/n} \\
 & = 1 +\lambda + (1-\varepsilon) \frac{1+\lambda + 2\sqrt \lambda R^{1/n}}{ \ln R^{1/n}} \\
  & \ge (1-\varepsilon) \left( 1 +\lambda + \frac{1+\lambda + 2\sqrt \lambda \bar{\alpha}}{ \ln \bar{\alpha}} \right),
\end{align*}
which completes the proof. We recall that we assumed that $S$ starts by going to $x_0 >0$  and $n$ is even. There are three remaining cases to consider, all covered in the Appendix, which follow along the same lines, though with some additional technical considerations.
\end{proof}

Combining Theorem~\ref{thm:linear.lower} and Eq.~\eqref{eq:eq2} we obtain the complete characterization of the Pareto frontier:

\begin{theorem}
The Pareto frontier $P$ for the linear search game is given by 
\[
P= \{\left(1+(1+\mu\bar{\alpha})/\ln \bar{\alpha},1+(1+\bar{\alpha}/\mu)/\ln \bar{\alpha} \right): \lambda \in (0,1) \}
\]
where $\bar{\alpha}$ minimizes the function
$
f(\alpha)= 1+\lambda +  \frac{1+\lambda+ 2 \sqrt \lambda \alpha}{\ln \alpha}.
$
\label{thm:main.linear}
\end{theorem}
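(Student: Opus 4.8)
The plan is to apply the scalarization framework of Section~\ref{sec:general}. By Lemma~\ref{lem:pareto.game}, characterizing the Pareto frontier reduces to solving the auxiliary games $G_{(1,\lambda)}$ for $\lambda \ge 0$, together with $G_{(0,1)}$ (which is just the standard game $G$); the consistency--robustness pairs of optimal Searcher strategies in these games are exactly the frontier points $P_{(1,\lambda)}$. The lemma requires each $G_{(1,\lambda)}$ to possess a value and an optimal min-max Searcher strategy, and I would obtain this as a by-product of the matching bounds below, so that no general minimax theorem for infinite games need be invoked.

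The heart of the argument is to compute the value of $G_{(1,\lambda)}$ for each fixed $\lambda \in (0,1)$. For the upper bound, note that within the family $s_{\alpha,\mu}$ the choice $\mu=\sqrt\lambda$ is optimal for the objective $C+\lambda R$ (it is the unconstrained minimizer of $\mu\alpha+\lambda\alpha/\mu$), and it is feasible precisely because $\sqrt\lambda<1$; Corollary~\ref{cor:linear.upper} and Eq.~\eqref{eq:eq2} then give that $s_{\bar\alpha,\sqrt\lambda}$ guarantees payoff at most $1+\lambda+\frac{1+\lambda+2\sqrt\lambda\,\bar\alpha}{\ln\bar\alpha}$, where $\bar\alpha$ minimizes $f$. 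For the matching lower bound, Theorem~\ref{thm:linear.lower} furnishes, for every $\varepsilon>0$, asymmetric hiding distributions $h_1,h_2$ against which every pure strategy pays at least $(1-\varepsilon)$ times the same quantity. Since $v(s,(h_1,h_2))=u(s,h_1)+\lambda u(s,h_2)$ is linear in the Searcher's mixed strategy, its infimum over mixed strategies is attained by a pure strategy, so this bound extends to mixed strategies; chaining the weak-duality inequality $\inf_s\sup\ge\sup\inf$ with the two bounds and letting $\varepsilon\to 0$ forces both quantities to equal $1+\lambda+\frac{1+\lambda+2\sqrt\lambda\,\bar\alpha}{\ln\bar\alpha}$. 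In particular $G_{(1,\lambda)}$ has this value, attained by $s_{\bar\alpha,\sqrt\lambda}$.

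It then remains to disentangle this single scalar value into the two separate objectives. Proposition~\ref{prop:line.upper} (with $\mu=\sqrt\lambda$) gives $C(s_{\bar\alpha,\sqrt\lambda})\le 1+\frac{1+\sqrt\lambda\,\bar\alpha}{\ln\bar\alpha}$ and $R(s_{\bar\alpha,\sqrt\lambda})\le 1+\frac{1+\bar\alpha/\sqrt\lambda}{\ln\bar\alpha}$, and a one-line algebraic check shows that the $\lambda$-weighted sum of these two bounds is exactly the value of $G_{(1,\lambda)}$. Since the optimal strategy attains that value, $C(s_{\bar\alpha,\sqrt\lambda})+\lambda R(s_{\bar\alpha,\sqrt\lambda})$ equals it; as each of $C,R$ is bounded above by its own term and the terms sum (with weight $\lambda$) to exactly the value, both inequalities must hold with equality. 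This pins down the unique point of $P_{(1,\lambda)}$ as $\bigl(1+\frac{1+\sqrt\lambda\,\bar\alpha}{\ln\bar\alpha},\,1+\frac{1+\bar\alpha/\sqrt\lambda}{\ln\bar\alpha}\bigr)$.

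Taking the union over $\lambda\in(0,1)$ yields the claimed frontier. For completeness I would treat the remaining scalarizations: for $\lambda\ge 1$ the unconstrained optimum $\mu=\sqrt\lambda$ violates the feasibility bound $\mu\le 1$, so the constrained optimum is $\mu=1$, the symmetric strategy collapsing to the single point $(\rho^*,\rho^*)$, which is likewise the value of $G_{(0,1)}$ and recovers the optimal competitive ratio of~\eqref{eq:randomized.cr}; as $\lambda\to 0$ the consistency decreases to its infimum while the robustness diverges. I expect the only genuinely delicate point in assembling this theorem to be the ``forcing'' step of the third paragraph, which converts the value of a single scalarized game into the exact values of two separate objectives; the heavier analytic machinery --- the construction of $h_1,h_2$ and the telescoping AM--GM estimate --- is already discharged in Theorem~\ref{thm:linear.lower}.
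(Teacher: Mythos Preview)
Your plan is correct and follows the paper's approach exactly: use the scalarization framework of Section~\ref{sec:general} (Lemma~\ref{lem:pareto.game}) to reduce to the games $G_{(1,\lambda)}$, match the upper bound from Corollary~\ref{cor:linear.upper}/Eq.~\eqref{eq:eq2} against the lower bound of Theorem~\ref{thm:linear.lower}, and read off the frontier. Your ``disentangling'' step---forcing $C=c_0$ and $R=r_0$ from $C\le c_0$, $R\le r_0$, and $C+\lambda R=c_0+\lambda r_0$---is a useful clarification that the paper leaves implicit in its one-line ``combining'' remark.
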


\section{Conclusion}
\label{sec:conclusions}

We gave the first study of search games in settings in which the Searcher has some prediction on the Hider's position, and the objective is to identify the Pareto frontier of consistency/robustness tradeoffs. Our model does not require any probabilistic assumptions on the correctness of the prediction, and the analysis reveals the power and the limitations of fully randomized strategies. We presented different approaches for applying the minmax theorem, including one based on the concept of scalarization from multiobjective optimization. Our proposed framework is quite general, and can be applied to any of the multitude of search games that have been studied in the literature, but also to any two-person zero-sum game with predictions. 

There are other classes of games rooted in Search Theory which are natural candidates for a study that incorporates predictions. For example, one may consider {\em patrolling games}, e.g.,~\cite{alpern2011patrolling}, in which the Patroller must guard the environment against the invasion of an Intruder. Here, the Patroller may have some ``prediction'' (e.g., information obtained by intelligence) on the candidate intrusion points and/or the times at which the intrusion will take place, but would like also to safeguard against the possibility that the intelligence is malicious disinformation. Another class of candidate problems is {\em rendezvous} games, and in particular the variant in which the players leave {\em markers} so as to speedup the rendevous, see, e.g.~\cite{baston2001rendezvous}. Here, it would be interesting to consider the possibility that the markers are not necessarily reliable, and describe the Pareto frontier of the optimal rendezvous strategies in such settings. Last, one may consider the well-studied class of {\em cops and robbers} games, e.g.,~\cite{frieze2012variations}, in which the Cop player must capture the Robber player that aims to permanently escape by moving within an environment modeled by a graph. Here, the prediction may provide some information on the movements of the Robber within the graph, and the consistency/robustness can be related, for instance, to the cop number if the predictions are reliable or adversarial, respectively.  

\paragraph{Acknowledgements}
This material is based upon work supported by the Air Force Office of Scientific Research under award number FA9550-23-1-0556, and by the project PREDICTIONS, grant ANR-23-CE48-0010 from the French National Research Agency (ANR). This research also benefited from a DAAD academic fellowship at LMU Munich. The authors would also like to acknowledge the Lorentz Center at Leiden University, since some of the results were obtained at the Workshop on Search Games organized by the Lorentz Center.

\bibliographystyle{plain}
\bibliography{targets}

\bigskip

\appendix

\noindent
{\Large\bf Appendix}

\bigskip

\begin{proof}[Details in the proof of Theorem~\ref{thm:linear.lower}]

We will consider below the three remaining cases in the proof. With a slight change of notation, we will show that for every $\eta>0$ it holds that
\[
 u(S,h_1) + \lambda u(S,h_2)   \ge  (1-\eta) \left( 1 +\lambda + \frac{1+\lambda + 2\sqrt \lambda \bar{\alpha}}{ \ln \bar{\alpha}} \right),
\]

Since some of the remaining cases are technically more complicated, we need to introduce some additional definitions.
Let $\delta=\eta/2$ and let $N$ be an integer such that $\lambda^{1/(2N)} \ge 1-\delta$. Let $R \ge \max\{e, \exp((1-\delta)/\delta)\}$ be large enough to satisfy
\begin{align}
 \frac{1+\lambda + 2\lambda R^{1/N}}{ \ln R^{1/N}} \ge \frac{1+\lambda + 2\sqrt \lambda \bar{\alpha}}{\ln \bar{\alpha}}. \label{eq:Rbig}
\end{align}
Note that this is possible because the expression on the left of (\ref{eq:Rbig}) is increasing in $R$ for $R$ sufficiently large. We need to make sure $R \ge e$ to ensure that $R^{1/n} / \ln R^{1/n}$ is decreasing in $n$ for all $n \ge 1$ (which will be important later). Let  $\varepsilon = 1/(\ln R + 1)$. Then  $R=\exp((1-\varepsilon)/\varepsilon)$ and $\varepsilon < \delta$, since $R \ge \exp((1-\delta)/\delta)$. Observe that with these definitions in mind,~\eqref{eq:hiding-int} still holds. We can now proceed with the analysis of the remaining three cases.

\subsubsection*{Case 2: $n$ is even, $S$ starts by going to $-x_0 < 0$.}

We write $x_{-1}=1/\sqrt{\lambda}$ and $x_{-2}=\sqrt{\lambda}$. 
Similarly to the main case, we derive

\begin{align*}
u(S, h_{-\sqrt{\lambda}}) &= 1 + 2  \sum_{\substack{i \text{ odd,}\\ 1 \le i \le n-1}} (x_{i} + x_{i-1}) \int_{x_{i-1}}^{x_n} \frac{1}{x} ~ dh_{\sqrt{\lambda}}(x) \\
& = 1 + \varepsilon n + 2\varepsilon  \sum_{\substack{i \text{ odd,}\\ 1 \le i \le n-1}} \frac{ x_{i}}{x_{i-1}}.
\end{align*}

Also,

\begin{align*}
 u(S, h_{1/\sqrt{\lambda}}) &= 1 + 2  x_0 \int_{x_{-1}}^{x_{n-1}} \frac{1}{x} ~ dh_{1/\sqrt{\lambda}}(x) + 2 \sum_{\substack{i \text{ even,}\\ 2 \le i \le n-2}} (x_i + x_{i-1}) \int_{x_{i-1}}^{x_{n-1}} \frac{1}{x} ~ dh_{1/\sqrt{\lambda}}(x) \\
& = 1 + \varepsilon (n-2) + 2\varepsilon  \sum_{\substack{i \text{ even,}\\ 0 \le i \le n-2}} \frac{ x_{i}}{x_{i-1}}.
\end{align*}

Hence,

\begin{align*}
u(S,h_{1/\sqrt{\lambda}}) + \lambda u(S,h_{-\sqrt{\lambda}}) &= (1+\lambda)(1 + \varepsilon n) - 2 \varepsilon +   2\varepsilon  \sum_{\substack{i \text{ even,}\\ 0 \le i \le n-2}} \frac{ x_{i}}{x_{i-1}} +  2\varepsilon \lambda  \sum_{\substack{i \text{ odd,}\\ 1 \le i \le n-1}} \frac{ x_{i}}{x_{i-1}} \\
&\ge (1+\lambda)(1 + \varepsilon n - 2 \varepsilon)  + 2 \varepsilon n \left( \lambda^{n/2} \prod_{i=0}^{n-1} \frac{x_i}{x_{i-1}} \right)^{1/n} \\
 & = (1 +\lambda)(1- 2 \varepsilon)  + (1-\varepsilon) \frac{1+\lambda + 2\sqrt \lambda R^{1/n}}{ \ln R^{1/n}} \\
  & \ge (1-\eta) \left( 1 +\lambda + \frac{1+\lambda + 2\sqrt \lambda \bar{\alpha}}{ \ln \bar{\alpha}} \right),
\end{align*}
since $2 \varepsilon < 2\delta= \eta$.

\subsubsection*{Case 3: $n$ is odd, $S$ starts by going to $x_0 > 0$.}

We write $x_{-1}=\sqrt{\lambda}$ and $x_{-2}=1/\sqrt{\lambda}$. We have

\begin{align*}
u(S, h_{1/\sqrt{\lambda}}) &= 1 + 2  \sum_{\substack{i \text{ odd,}\\ 1 \le i \le n-2}} (x_{i} + x_{i-1}) \int_{x_{i-1}}^{x_{n-1}} \frac{1}{x} ~ dh_{1/\sqrt{\lambda}}(x) \\
&= 1+ \varepsilon(n-1) + 2\varepsilon  \sum_{\substack{i \text{ odd,}\\ 1 \le i \le n-2}} \frac{ x_{i}}{x_{i-1}}.
\end{align*}

Also,

\begin{align*}
u(S, h_{-\sqrt{\lambda}}) &= 1 + 2 x_0  \int_{x_{-1}}^{x_{n}} \frac{1}{x} ~ dh_{\sqrt{\lambda}}(x) + 2  \sum_{\substack{i \text{ even,}\\ 2 \le i \le n-1}} (x_{i} + x_{i-1}) \int_{x_{i-1}}^{x_{n}} \frac{1}{x} ~ dh_{\sqrt{\lambda}}(x) \\
&=1 + \varepsilon (n-1) + 2\varepsilon  \sum_{\substack{i \text{ even,}\\ 0 \le i \le n-1}} \frac{ x_{i}}{x_{i-1}}.
\end{align*}

Hence,

\begin{align*}
u(S,h_{1/\sqrt{\lambda}}) + \lambda u(S,h_{-\sqrt{\lambda}}) &= (1+\lambda)(1 + \varepsilon n - \varepsilon)  +   2\varepsilon  \sum_{\substack{i \text{ even,}\\ 0 \le i \le n-1}} \frac{ x_{i}}{x_{i-1}} +  2\varepsilon \lambda  \sum_{\substack{i \text{ odd,}\\ 1 \le i \le n-2}} \frac{ x_{i}}{x_{i-1}} \\
&\ge (1+\lambda)(1 + \varepsilon n - \varepsilon)   + 2 \varepsilon n \left( \lambda^{(n-1)/2} \prod_{i=0}^{n-1} \frac{x_i}{x_{i-1}} \right)^{1/n} \\
& = (1+\lambda)(1 + \varepsilon n - \varepsilon) +   2 \varepsilon n \sqrt{\lambda} \cdot \lambda^{-3/(2n)} R^{1/n} \\
 & \ge (1 +\lambda)(1- \varepsilon) + (1-\varepsilon) \frac{1+\lambda + 2\sqrt \lambda R^{1/n}}{ \ln R^{1/n}} \\
  & \ge (1-\eta) \left( 1 +\lambda + \frac{1+\lambda + 2\sqrt \lambda \bar{\alpha}}{ \ln \bar{\alpha}} \right) .
\end{align*}

\subsubsection*{Case 4: $n$ is odd, $S$ starts by going to $-x_0 < 0$.}

We write $x_{-1}=1/\sqrt{\lambda}$ and $x_{-2}=\sqrt{\lambda}$. Again, leaving out some of the details,

\begin{align*}
u(S, h_{-\sqrt{\lambda}}) &= 1 + 2  \sum_{\substack{i \text{ odd,}\\ 1 \le i \le n-2}} (x_{i} + x_{i-1}) \int_{x_{i-1}}^{x_{n-1}} \frac{1}{x} ~ dh_{\sqrt{\lambda}}(x) \\
& = 1 + \varepsilon (n-1) + 2\varepsilon  \sum_{\substack{i \text{ odd,}\\ 1 \le i \le n-2}} \frac{ x_{i}}{x_{i-1}}.
\end{align*}

Also,

\begin{align*}
 u(S, h_{1/\sqrt{\lambda}}) &= 1 + 2  x_0 \int_{x_{-1}}^{x_{n}} \frac{1}{x} ~ dh_{1/\sqrt{\lambda}}(x) + 2 \sum_{\substack{i \text{ even,}\\ 2 \le i \le n-1}} (x_i + x_{i-1}) \int_{x_{i-1}}^{x_{n}} \frac{1}{x} ~ dh_{1/\sqrt{\lambda}}(x), \\
& = 1 + \varepsilon (n-1) + 2\varepsilon  \sum_{\substack{i \text{ even,}\\ 0 \le i \le n-1}} \frac{ x_{i}}{x_{i-1}}.
\end{align*}

Hence,

\begin{align}
u(S,h_{1/\sqrt{\lambda}}) + \lambda u(S,h_{-\sqrt{\lambda}}) &= (1+\lambda)(1 + \varepsilon n - \varepsilon)  +   2\varepsilon  \sum_{\substack{i \text{ even,}\\ 0 \le i \le n-1}} \frac{ x_{i}}{x_{i-1}} +  2\varepsilon \lambda  \sum_{\substack{i \text{ odd,}\\ 1 \le i \le n-2}} \frac{ x_{i}}{x_{i-1}} \nonumber \\
&\ge (1+\lambda)(1 + \varepsilon n - \varepsilon)   + 2 \varepsilon n \left( \lambda^{(n-1)/2} \prod_{i=0}^{n-1} \frac{x_i}{x_{i-1}} \right)^{1/n} \nonumber \\
& = (1+\lambda)(1 + \varepsilon n - \varepsilon) +   2 \varepsilon n \sqrt{\lambda} \cdot \lambda^{1/(2n)} R^{1/n} \nonumber \\
 &  = (1 +\lambda)(1- \varepsilon) + (1-\varepsilon) \frac{1+\lambda + 2\sqrt \lambda \cdot \lambda^{1/(2n)} R^{1/n}}{ \ln R^{1/n}}. \label{eq:twocases}
\end{align}
We now split the analysis into two further cases. The first case is if $n \ge N$, so that $\lambda^{1/(2n)} \ge \lambda^{1/(2N)} \ge 1-\delta$, by definition of $N$. Then (\ref{eq:twocases}) implies that
\begin{align*}
u(S,h_{1/\sqrt{\lambda}}) + \lambda u(S,h_{-\sqrt{\lambda}}) & \ge (1 +\lambda)(1- \varepsilon) + (1-\varepsilon) \frac{1+\lambda + 2\sqrt \lambda \cdot (1-\delta) R^{1/n}}{ \ln R^{1/n}} \\
&\ge (1-\delta)^2  \left( 1 +\lambda + \frac{1+\lambda + 2\sqrt \lambda R^{1/n}}{ \ln R^{1/n}} \right) \\
& \ge (1 - \eta) \left( 1 +\lambda + \frac{1+\lambda + 2\sqrt \lambda \bar{\alpha}}{ \ln \bar{\alpha}} \right),
\end{align*}
since $(1-\delta)^2 = 1-\eta + \delta^2 \ge 1-\eta$.

The second case is if $n <N$. In this case, from (\ref{eq:twocases}) and $(\ref{eq:Rbig})$, and using $\lambda^{1/(2n)} \ge \sqrt{\lambda}$ and the fact that $R^{1/n}/\ln R^{1/n}$ is decreasing in $n$, we have
\begin{align*}
u(S,h_{1/\sqrt{\lambda}}) + \lambda u(S,h_{-\sqrt{\lambda}}) & \ge (1 +\lambda)(1- \varepsilon) + (1-\varepsilon) \frac{1+\lambda + 2 \lambda R^{1/N}}{ \ln R^{1/N}} \\
 & \ge (1 - \varepsilon) \left( 1 +\lambda + \frac{1+\lambda + 2\sqrt \lambda \bar{\alpha}}{ \ln \bar{\alpha}} \right) \\
 & \ge (1 - \eta) \left( 1 +\lambda + \frac{1+\lambda + 2\sqrt \lambda \bar{\alpha}}{ \ln \bar{\alpha}} \right) .
\end{align*}

We conclude that the result holds in all possible cases.
\end{proof}

\end{document}